\newtheorem{theorem}{Theorem}
\newtheorem{lemma}[theorem]{Lemma}								%
\newtheorem{proposition}[theorem]{Proposition}	
\newtheorem{assumption}[theorem]{Assumption}	
\theoremstyle{remark}
\newtheorem{remark}{Remark}
\numberwithin{equation}{section}	
\numberwithin{theorem}{section}
\newcommand{\<}{\left\langle} 
\renewcommand{\>}{\right\rangle}
\renewcommand{\(}{\left(}				
\renewcommand{\)}{\right)}
\renewcommand{\[}{\left[}
\renewcommand{\]}{\right]}	
\renewcommand{\vec}[1]{\mathbf{#1}}
\def\E{\mathbb{E}}			
\def\P{\mathbb{P}}										
\def\R{\mathbb{R}}
\def\Rplus{\mathbb{R^{+}}}
\def\A{\mathcal{A}}
\def\L{\mathcal{L}}
\def\B{\mathcal{B}}
\def\D{\mathcal{D}}	
\def\M{\mathcal{M}}		
\def\U{\mathcal{U}}
\def\S{\mathcal{S}}
\def\u{u^{\lam,\eps}}
\def\v{v^{\lam,\eps}}
\def\w{w}
\def\lnt{\L_{\mbox{\tiny{NT}}}}
\def\dmax{\delta_{\mbox{\tiny{max}}}}
\def\sF{{\cal F}}
\def\Sv{{\cal S}}
\def\eps{\varepsilon}
\def\sig{\sigma}
\def\lam{\lambda}
\newcommand{\sigb}{\bar{\sigma}}	%
\def \zhatr {\widehat{\zeta^r}}
\def \zhatl {\widehat{\zeta^l}}
\def \zetar {{\zeta^r}}
\def \zetal {{\zeta^l}}
\def \zetar {{\zeta^r}}
\def \X {{\hat X}}
\def \Y {{\hat Y}}
\def \Z {{\hat Z}}
\def \hatL {{\widehat L}}
\def \hatM {{\widehat M}}
\def\d{\partial}		
\def\dsig{\partial_{\sigma}}	
\def\ind{\mathbb{I}}
\def \l{\ell}
\def \u{u}
\def\lamt{{\eta}}
\def\V{\widehat{V}}
\def \abs#1{| #1 | }
\newcommand{\e}[1]{\operatorname{e}^{#1}}
\def\define{:=}
\newcommand{\pa}{\partial}
\newcommand{\half}{\frac12}
\newcommand{\vega}{\mathcal{V}}
\newcommand{\dpr}{\Delta_{0,\sigma}}
\begin{document}

\title{Optimal Investment with Transaction Costs and Stochastic Volatility}

\author{
Maxim Bichuch
\thanks{
Department of Mathematical Sciences,
Worcester Polytechnic Institute,
Worcester, MA 01609, USA. 
{\tt mbichuch@wpi.edu}. Partially supported by NSF grant DMS-0739195.}
\and Ronnie Sircar
\thanks{
Department of Operations Research
and Financial Engineering,
Princeton University,
Princeton, NJ 08544, USA. 
{\tt sircar@princeton.edu}. Partially supported by NSF grant DMS-1211906.}
}
\date{January 2014, revised \today}
\maketitle

\begin{abstract}
Two major financial market complexities are transaction costs and uncertain volatility, and we analyze their joint impact on the problem of portfolio optimization. When volatility is constant, the transaction costs optimal investment problem has a long history, especially in the use of asymptotic approximations when the cost is small. Under stochastic volatility, but with no transaction costs, the Merton problem under general utility functions can also be analyzed with asymptotic methods. Here, we look at the long-run growth rate problem when both complexities are present, using separation of time scales approximations. This leads to perturbation analysis of an eigenvalue problem. We find the first term in the asymptotic expansion in the time scale parameter, of the optimal long-term growth rate, and of the optimal strategy, for fixed small transaction costs. %
\end{abstract}

{\bf AMS subject classification} 91G80, 60H30.\\

{\bf JEL subject classification} G11.\\

{\bf Keywords} Transaction costs, optimal investment, asymptotic analysis, utility maximization, stochastic volatility.\\

\section{Introduction}
The portfolio optimization problem, first analyzed within a continuous time model in \cite{merton69}, ignores two key features that are important for investment decisions, namely transaction costs and uncertain volatility. Both these issues complicate the analysis of the expected utility maximization stochastic control problem, and obtaining closed-form optimal policies, or even numerical approximations, is challenging due to the increase in dimension by incorporating a stochastic volatility variable, and the singular control problem that arises by considering proportional transaction costs. Here, we develop asymptotic approximations for a particular long-run investment goal in a model with transaction costs {\em and} stochastic volatility.

The typical problem has an investor who can invest in a market with one riskless asset (a money market
account), and one risky asset (a stock), and who has to pay a  transaction cost  for selling the stock. The costs are proportional to the dollar amount of the sale, with proportionality constant $\lam>0$.
The investment goal is to maximize the long-term growth rate.
The original works all assumed stocks with constant volatility.
Transaction costs were first introduced into the Merton portfolio problem by
 \cite{MagillConstantinides} and later further investigated by \cite{DumasLuciano}. Their analysis of the infinite time
horizon investment and consumption problem %
gives an insight into the optimal strategy and the existence of a
{\em``no-trade"} (NT) region.
Under certain assumptions, \cite{DavisNorman} provided the first
rigorous analysis of the same infinite time horizon problem. These
assumptions were weakened by \cite{ShreveSoner}, who used viscosity
solutions to also show the smoothness of the value function.

When $\lam>0$, and the volatility is constant, the optimal policy is
to trade as soon as the position is sufficiently far away from the
Merton proportion. More specifically, the agent's optimal policy is to
maintain her position inside a NT region. If the investor's position is initially outside the NT region, she should immediately sell or buy stock in order to move to its boundary. She will trade only when her position is on the boundary of the
NT region, and only as much as necessary to keep it from exiting the
NT region, while no trading occurs in the interior of the region; see
\cite{DavisPanasZariphopoulou}.

There is a trade-off between the amount of transaction costs paid due
to portfolio rebalancing and the width of the NT region. A smaller NT region generally increases the
amount spent paying transaction costs in maintaining the optimal portfolio. Not surprisingly, the same behavior persists when volatility is stochastic, but in this case, the boundaries of NT
region in general will no longer be straight lines as before.
Hence, the approach of this paper, is to find a simple strategy that will be asymptoticaly optimal in both the volatility scaling and transaction costs parameters.  

Small transaction cost asymptotic expansions  (in powers of $\lambda^{1/3}$) were 
used in  \cite{JanecekShreve1} for an infinite horizon investment and consumption problem. This approach allows them to find approximations to the
optimal policy and the optimal long-term growth rate, and is also used in \cite{Bichuch} for a finite horizon optimal investment problem.  The survey article \cite{guasoni} %
describes recent results using so-called {\em shadow price} %
to obtain small transaction cost asymptotics for the optimal investment policy, its
implied welfare, liquidity premium, and trading volume. All of the
above mentioned literature on transaction costs assumes constant
volatility. Some recents exceptions are \cite{KallsenMuhle-Karbe,KallsenMuhle-Karbe1}, where the stock is a general It\^o diffusion, and \cite{SonerTouzi} where the stock is described by a complete (local volatility) model.  We summarize some of this literature and the individual optimization problems and models that they study in Table \ref{tab:T1}.
\begin{table}[htb]
\begin{center}
\scriptsize{\begin{tabular}{| l | c | c | c | c |  }
\hline
{\bf Paper} &  {\bf Model} & {\bf Utility} & {\bf Objective} &  {\bf Solution}	 \\
\hline\hline
\cite{DumasLuciano} & B-S & Power & LTGR & Explicit \\ \hline
\cite{DavisNorman} & B-S & Power & $\infty$-consumption & Numerical\\ \hline
\cite{ShreveSoner} & B-S  & Power & $\infty$-consumption & Viscosity\\ \hline
\cite{DavisPanasZariphopoulou} & B-S & Exponential & Option pricing & Viscosity\\ \hline
\cite{WW} & B-S & Exponential & Option pricing & $\lambda$-expansion\\ \hline
\cite{JanecekShreve1} & B-S & Power & $\infty$-consumption & $\lambda$-expansion\\ \hline
\cite{Bichuch} & B-S & Power & $T<\infty$ & $\lambda$-expansion\\ \hline
\cite{dai2009finite} & B-S & Power & $T<\infty$ & ODEs Free-Bdy\\ \hline
\cite{GMGS} & B-S & Power & LTGR &  $\lambda$-expansion\\ \hline
\cite{GoodmanOstrov} & B-S & General & $T<\infty$ &  $\lambda$-expansion\\ \hline
\cite{ChoiSirbuZitk} & B-S & Power & $\infty$-consumption & ODEs Free-Bdy\\ \hline
\cite{KallsenMuhle-Karbe} & It\^o & General & $T<=\infty$, consumption  & $\lambda$-expansion\\
\hline
\cite{KallsenMuhle-Karbe1} & It\^o &Exponential & Option pricing & $\lambda$-expansion\\ \hline
\cite{SonerTouzi} & Local Vol &General on $\Rplus$ & $T<\infty$ & $\lambda$-expansion\\ \hline
\cite{caflisch} & Stoch vol & Exponential & Option pricing & $\lambda$-SV expansion\\ \hline
This paper & Stoch vol & Power & LTGR & SV expansion\\
\hline
\hline  
\end{tabular}}
\end{center}
\caption{Problems, models and solution approaches. The acronyms used are: B-S = Black-Scholes,  LTGR = Long-Term Growth Rate, SV=Stochastic Volatility, Free-Bdy=Free Boundary.}
\label{tab:T1}
\end{table}

Our approach exploits the fast mean-reversion of volatility (particularly when viewed over a long investment horizon) leading to a singular perturbation analysis of an impulse control problem. We treat the case, when the volatility is slow mean reverting separately. 
This complements multiscale approximations developed for derivatives pricing problems described in \cite{fpss-book} and for optimal hedging and investment problems in \cite{mattias1} and \cite{fsz} respectively. 
Recently, \cite{caflisch} study indifference pricing of European options with exponential utility, fast mean-reverting stochastic volatility and small transaction costs which scale with the volatility time scale.
The current transaction cost problem can be characterized as a free-boundary problem. The 
fast mean-reversion asymptotics for the
finite horizon free boundary problem arising from American options pricing was developed in \cite{american}, and recently there has been interest in similar analysis for perpetual (infinitely-lived) American options (used as part of a real options model) in \cite{ewald}, and for a structural credit risk model in \cite{mcquade}. Here, we also have an infinite horizon free-boundary problem, but it is, in addition, an eigenvalue problem.

In Section \ref{sec:model} of this paper, we introduce our model and
objective function and give the associated Hamilton-Jacobi-Bellman (HJB) equation.
In Section \ref{sec:asymptotic} we perform the asymptotic analysis.
We first consider the fast-scale stochastic volatility in Section \ref{sec:buildSol}, where we find the first correction term in
the power expansion of the value function, and as a result also find
the corresponding term in the power expansion of the optimal
boundary.
We perform similar analysis in the case of slow-scale stochastic volatility in Section \ref{sec:buildSolSlow}. 
In Section \ref{sec:results} we show numerical calculations based on our results, and give an alternative intuitive explanation to the findings.
We summarize the results obtained in the paper in Section \ref{sec:conclusion}, and leave some technical computations to the Appendix.

\section{A Class of Stochastic Volatility Models with Transaction Costs}\label{sec:model}
An investor can allocate capital between two assets -- a risk-free money market account with constant rate of interest $r$, and risky stock $S$ that evolves according to the following stochastic volatility model: 
\begin{eqnarray*}
\frac{dS_t}{S_t} &=& (\mu+r)\,dt + f(Z_t)\,dB^1_t,  
\\
dZ_t &=&	\frac{1}{\eps} \alpha(Z_t)\,dt + \frac{1}{\sqrt{\eps}}\beta( Z_t)\, dB_t^2 , 
\label{eq:Z} 
\end{eqnarray*}
where $B^1$ and $B^2$  are Brownian motions, defined on a filtered probability space $(\Omega, \sF, \{\sF(t)\}_{t\geq 0},\P)$, with constant correlation coefficient $\rho\in(-1,1)$:  $d\< B^1, B^2 \>_t
=\rho\,dt$.
We assume that $f(z)$ is a smooth, bounded and strictly positive function, and that the stochastic volatility factor $Z_t$ is a fast mean-reverting process, meaning that the parameter $\eps>0$ is small, and that $Z$ is an ergodic process with a unique invariant distribution $\Phi$ that is independent of $\eps$. We refer to \cite[Chapter 3]{fpss-book} for further technical details and discussion. 
Additionally $r, \mu$ are positive constants, and $\alpha, \beta$ are smooth functions: examples will be specified later for computations. 

\subsection{Investment Problem}
The investor must choose a policy consisting of two adapted processes $L$ and $M$ that are nondecreasing and right-continuous with left limits, and $L_{0-} =M_{0-} = 0$. The control $L_t$ represents the cumulative dollar value of stock purchased up to time $t$, while $M_t$ is the cumulative dollar value of stock sold. Then, the wealth $X$ invested in the money market account and the wealth $Y$ invested in the stock follow
\begin{eqnarray*}
dX_t		&=&		 r X_t \,dt - dL_t + \( 1 - \lam \) dM_t ,\label{eq:X}\\
dY_t		&=&		(\mu +r)  Y_t\,dt + f(Z_t) Y_t \, dB_t^1 +  dL_t - dM_t. 
\end{eqnarray*}
The constant $\lambda \in (0,1)$ represents the proportional transaction costs for selling the stock.

Next, we define the \emph{solvency region}
\begin{equation}
\Sv \triangleq
\left\{(x,y);\:x+\,y > 0,\: x+(1-\lambda) \,y >
0\right\},\label{eq:solvency}
\end{equation}
which is the set of all positions, such that if the investor were forced to liquidate immediately, she would not be bankrupt. This leads to a definition that a policy $(L_s,M_s)\big|_{s\ge t}$ is \emph{admissible} for the initial position $Z_t=z$ and $(X_{t-}, Y_{t-})=(x,y)$ starting at time $t^{-}$, if $(X_s,Y_s)$ is in the closure of solvency region, $\overline{\Sv} $, for all $s\ge t$.
(Since the investor may choose to immediately rebalance his position, we have denoted the initial time $t-$). Let
$\A(t,x,y,z)$ the set of all such policies. Clearly, if $(x,y) \in\overline{\Sv}$ then we can always liquidate the position, and then hold the resulting cash position in the risk-free money market account.
It is easy to adapt the proof in \cite{ShreveSoner} (for the constant volatility case) to show that 
$\A(t,x,y,z) \neq \emptyset$ if and only if $(t,x,y,z) \in[0,\infty)\times\overline{\Sv}\times\R.$ 

We work with CRRA  or power utility functions $\U(w)$ defined on $\R_+$:
$$ \U(w) \define \frac{w^{1-\gamma}}{1-\gamma}, \qquad \gamma>0,\quad\gamma\ne1,$$
where $\gamma$ is the constant relative risk aversion parameter. We are interested in maximizing:
$$
\sup_{(L,M) \in \A(0,x,y)}\liminf\limits_{T\to\infty} \frac1T \log \U^{-1}\left(\E^{x,y,z}_0\left[ \U\left(X_T+Y_T-\lam Y_T^{+}\right)\right]\right), \quad (x,y,z) \in \overline{\Sv},\times\R,$$
where $\E^{x,y,z}_t[\cdot]\define\E[\cdot \big| X_{t-}=x,Y_{t-}=y, Z_t=z].$
This is a problem in optimizing long term growth. To see the economic interpretation note that 
the quantity $U^{-1}\left(\E^{x,y,z}_0\left[ U\left(X_T+Y_T-\lam Y_T^{+}\right)\right]\right)$ is the {\em certainty equivalent} of the terminal wealth $X_T+Y_T-\lam Y_T^{+}$. Hence if we can match this certainty equivalent with $(x+y-\lam y^{+})\e{(r+\delta^\eps)T}$ -- the investor's initial capital compounded at some rate $r+\delta^\eps$, then $\frac1T\log U^{-1}\left(\E^{x,y,z}_0\left[ U\left(X_T+Y_T-\lam Y_T^{+}\right)\right]\right)=r+\delta^\eps.$ 
For a survey and literature on this choice of objective function we refer to 
\cite{guasoni}.
This choice of optimization problem ensures the simplest HJB equation, which in this case turns out to be linear and time independent.

\subsection{HJB Equation}
Consider first the value function for utility maximization at a finite time horizon $T$:
$$ 
\V(t,x,y,z) = \sup_{(L,M) \in \A(t,x,y,z)}\E_t^{x,y,z} \[ \U\(X_T+Y_T-\lam Y_T^{+}\)\].
$$
From It\^o's formula it follows that
\begin{align*}
&d\V(t, X_t, Y_t, Z_t) \\
&=\left(\V_t + rX_t \V_x + \(\mu+r\) Y_t \V_y + \frac{1}{2}f^2(Z_t) Y_t^2 \V_{yy} + \frac{1}{\sqrt{\eps}} \rho f(Z_t) \beta(Z_t) Y_t \V_{yz}\right)dt\\
&+ \frac{1}{\eps}\left( \alpha(Z_t) \V_z + \frac{1}{2}\beta^2(Z_t) \V_{zz}^2 \right)dt + f(Z_t) Y_t \V_y\, dB_t^1 + \frac{1}{\sqrt{\eps}}\beta( Z_t )V_z \, dB_t^2\\
&+\left(\V_y - \V_x  \right) dL_t + \left( (1-\lam)\V_x - \V_y\right)dM_t.
\end{align*}
Since $\V$ must be a supemartingale, the $dt, dL_t$ and $dM_t$ terms must not be positive. It follows that $\V_y - \V_x\le0$ and $(1-\lam)\V_x - \V_y \le 0$. Alternatively, 
\begin{equation}
1\le\frac{\V_x}{\V_y}\le \frac{1}{1-\lam}.\label{NTdef}
\end{equation} 
We will define the no-trade ($\widehat{\mbox{NT}}$) region, associated with $\V$, to be the region where both of these inequalities are strict. Moreover, for the optimal strategy, $\V$ is a martingale, and so the $dt$ term above must be zero inside the $\widehat{\mbox{ NT}}$ region. Thus it will then satisfy the HJB equation
\begin{equation}
\max\left\{(\d_t+\D^\eps)\V , (\d_y - \d_x)\V, \(\(1-\lam\) \d_x - \d_y\)\V  \right\}=0, ~\V(T,x,y,z)=\U(x+y - \lam y^{+}),\label{eq:HJB-init}
\end{equation}
where 
\begin{align}
\D^\eps
		&=		r x \d_x + \(\mu+r\) y \d_y + \frac{1}{2}f^2(z) y^2 \d_{yy}^2  + \frac{1}{\sqrt{\eps}} \rho f(z) \beta(z) y \d_{yz}^2
					\\
					&+ \frac{1}{\eps}\( \alpha(z) \d_z + \frac{1}{2}\beta^2(z) \d_{zz}^2 \).
\end{align}

The fact that $\V$ is a viscosity solution of \eqref{eq:HJB-init} is standard, and a similar proof can be found for example in \cite{ShreveSoner}, and thus will be omitted here. We will furthermore assume that the viscosity solution $\V$ of \eqref{eq:HJB-init} is in fact a classical solution, that is we will assume that it is sufficiently smooth. It can be shown that $\V$ is smooth inside each of three regions: the $\widehat{\mbox{NT}}$, and the regions where $ (\d_y - \d_x)\V=0,$ and $\(\(1-\lam\) \d_x - \d_y\)\V=0$. The assumption that it is also smooth on the {\em boundary} of the $\widehat{\mbox{NT}}$ is the smooth fit assumption, which is very common; see, for instance, \cite{GoodmanOstrov}.

Next, we look for a solution of the HJB equation \eqref{eq:HJB-init} of the form
\begin{equation}
V(t,x,y,z) = x^{1-\gamma}{\v}	\(\zeta,z\)\e{ (1-\gamma)(r+\delta^\eps) (T-t) } ,\qquad \zeta=\frac{y}x,
\label{eq:substitution}
\end{equation}
where $\delta^\eps$ is a constant, and the function ${\v}$ is to be found. However, we will not impose the final time condition on $V$.
For now, we will only assume that it is smooth and $\abs{{\v}}$ is bounded away from zero. We will define the NT region (associated with $V$) as the region where $\(\d_t+\D^\eps\)V=0.$ Additionally, we will assume that for any point $(t,x,y,z)$ in the NT region, the ratio $y/x$ is bounded. We note that $V$ is not equivalent to the value function $\V$, since we have not imposed the final time condition $V(T,x, y, z) = \U(x+y-\lam y^{+}).$ In fact there is no reason to believe that the final time condition can be satisfied if $V$ is given by \eqref{eq:substitution}. %

However, if we find a constant $C$ such that $\abs{\V}\le C\abs{V}$, then it would follow that $\delta^\eps$ is the optimal growth rate and the NT region for the long-term optimal growth problem can be defined as the region where $\(\d_t+\D^\eps\)V=0.$ In other words, 
\begin{eqnarray*}
\liminf\limits_{T\to\infty} \frac1T \log \U^{-1}\left(\V(0,x,y,z)\right)  &=& \liminf\limits_{T\to\infty} \frac1T \log \U^{-1}\left(V(0,x,y,z)\right)\\
& = &\liminf\limits_{T\to\infty} \frac1T\frac{ \log V(0,x,y,z)}{1-\gamma}\\
&=&r+\delta^\eps. 
\end{eqnarray*}

We will now show that there exists a constant $C$ such that $\abs{\V}\le C\abs{V}$. Indeed, note that the utility function $\U$ is homogeneous of degree $ 1-\gamma$, that is $\U(w)=w^{1-\gamma}\U(1)$, it follows that
$$\U\(X_T+Y_T-\lam Y_T^{+}\) = X_T^{1-\gamma}\U\(1 + \frac{Y_T}{X_T} -\lam \(\frac{Y_T}{X_T}\)^{+}\).$$ 
By our assumption, $Y_T/X_T$ is bounded, being inside the NT region. Hence, there exists a constant $C$ such that 
\begin{align}
\abs{\V(T,x,y,z)} =\abs{\U\(x +y-\lam y^{+}\)} \le C \abs{V(T,x,y,z)} =x^{1-\gamma}C{\v}\(\frac{y}{x},z\).
\label{eq:final-cond}
\end{align}
Since both $V$ and $\V$ solve the HJB equation \eqref{eq:HJB-init}, it follows by a comparison theorem that $\abs{\V}\le C\abs{V}$ everywhere. For the reader convenience, we have sketched the proof of it in  Appendix \ref{sec:comparisonThm}.

Inserting the transformation \eqref{eq:substitution} into \eqref{eq:HJB-init} leads to the following equation for $({\v},\delta^\eps)$:
\begin{align}
\max\left\{\frac1\eps\L_0 + \frac1{\sqrt{\eps}}\L_1 + \(\L_2-\(1-\gamma\)\delta^\eps\cdot\)\,,~\B,~ \S   \right\}\v =0,
\label{eq:HJB-reduced}
\end{align}
where we define the operators in the NT region by
\begin{equation}
\L_0 =  \frac{1}{2}\beta^2(z) \d_{zz}^2+\alpha(z) \d_z, \qquad
\L_1 = \rho f(z) \beta(z) \zeta \d_{\zeta z}^2, \qquad
\L_2	=  \frac12 f^2(z)\zeta^2\d_{\zeta\zeta} + \mu\zeta \d_{\zeta},   \label{eq:Ls}
\end{equation}
and the buy and sell operators by
\begin{align}
\B   &=  \(1+\zeta\) \d_{\zeta} - (1-\gamma)\cdot\, ,\label{eq:B}\\
\S  & = \(\frac1{1-\lam} +\zeta\) \d_{\zeta} - (1-\gamma)\cdot\,,\label{eq:S}
\end{align}
respectively.
For future reference, we also define their derivatives
\begin{align}
\B'   &=  \d_\zeta \B = \(1+\zeta\) \d_{\zeta\zeta}+\gamma \d_{\zeta}\, ,\label{eq:B'}\\
\S'  & =  \d_\zeta \S=\(\frac1{1-\lam} +\zeta\) \d_{\zeta\zeta} + \gamma \d_{\zeta}\,.\label{eq:S'}
\end{align}

\subsection{Free Boundary Formulation \& Eigenvalue Problem\label{freebf}}
We will look for a solution to the variational inequality \eqref{eq:HJB-reduced} in the following free-boundary form. The NT region is defined by \eqref{NTdef}, but for the function $V$. Using the transformation \eqref{eq:substitution}, this translates to 
$$1+\zeta <(1-\gamma)\(\frac{\v}{\v_{\zeta}}\) < \frac{1}{1-\lam} +\zeta$$
for $\v(\zeta,z)$. Similar to the case with constant volatility, we assume that there exists a no-trade region, within which $\(\frac1\eps\L_0 + \frac1{\sqrt{\eps}}\L_1 + \(\L_2-\(1-\gamma\)\delta^\eps\cdot\)\)\v=0,$ with boundaries $\l^\eps(z)$ and $u^\eps(z)$.
We write this region as 
$$\min\{\l^\eps(z),u^\eps(z)\}<\zeta< \max\{\l^\eps(z),u^\eps(z)\},$$
where $\l^\eps(z)$ and $u^\eps(z)$ are free boundaries to be found. In typical parameter regimes, we will  have $0<\l^\eps(z)<u^\eps(z)$, so we can think of them as lower and upper boundaries respectively, with $\l^\eps$ being the buy boundary, and $u^\eps$ the sell boundary. (The other two possibilities are that $\l^\eps<u^\eps<0$ with $\l^\eps$ being the buy boundary, and $u^\eps$ the sell boundary, or that $\l^\eps<u^\eps<0$ with $\l^\eps$ being the sell boundary, and $u^\eps$ the buy boundary. Under a constant volatility model these cases can be categorized explicitly in term of the model parameters: see Remark \ref{rem:cases}).

Inside this region we have from the HJB equation \eqref{eq:HJB-reduced} that 
\begin{equation}
\left(\frac1\eps\L_0 + \frac1{\sqrt{\eps}}\L_1 + \(\L_2-\(1-\gamma\)\delta^\eps\)\right) {\v} =0,\qquad
\zeta\in(\l^\eps(z),u^\eps(z)).\label{NTeqn}
\end{equation}
The free boundaries $\l^\eps$ and $u^\eps$ are determined by continuity of the first and second derivatives of $\v$ with respect to $\zeta$, that is looking for a $C^2$ solution. In the buy region,
\begin{equation}
\B\v=0 \qquad\mbox{in } \zeta<\l^\eps(z),\label{buyeqn} 
\end{equation}
and so the smooth pasting conditions at the lower boundary are
\begin{align}
\B\v\mid_{\l^\eps(z)} &= \(1+\l^\eps(z)\)\v_\zeta(\l^\eps(z)) - (1-\gamma)\v(\l^\eps(z)) =  0,\label{lbcn1}\\
\B'\v\mid_{\l^\eps(z)} &\define\(1+\l^\eps(z)\)\v_{\zeta\zeta}(\l^\eps(z)) + \gamma\v_\zeta(\l^\eps(z))  =  0.\label{lbcn2}
\end{align}
In the sell region, the transaction cost enters and we have:
\begin{equation}
\S\v=0 \qquad \mbox{in } \zeta>u^\eps(z). \label{selleqn}
\end{equation}
Therefore the sell boundary conditions are:
\begin{align}
\S\v\mid_{u^\eps(z)} &=\(\frac1{1-\lam} +u^\eps(z)\)\v_\zeta(u^\eps(z)) - (1-\gamma)\v(u^\eps(z)) =  0,\label{ubcn1}\\
\S'\v\mid_{u^\eps(z)} &\define\(\frac1{1-\lam}+u^\eps(z)\)\v_{\zeta\zeta}(u^\eps(z)) + \gamma\v_\zeta(u^\eps(z))  =  0.\label{ubcn2}
\end{align}

We note that \eqref{NTeqn}, \eqref{buyeqn} and \eqref{selleqn} are homogeneous equations with homogeneous boundary conditions \eqref{lbcn1}, \eqref{lbcn2}, \eqref{ubcn1} and \eqref{ubcn2}, and so zero is a solution. However the constant $\delta^\eps$ is also to be determined, and in fact it is an eigenvalue found to exclude the trivial solution and give the optimal long-term growth rate. In the next section, we construct an asymptotic expansion in $\eps$ for this eigenvalue problem using these equations.

\section{Fast-scale Asymptotic Analysis}\label{sec:asymptotic}
We look for an expansion for the value function
\begin{equation}
\v =  v^{\lam,0} + \sqrt{\eps}\, v^{\lam,1} + \eps  v^{\lam,2} + \cdots,   \label{eq:v-expansion}
\end{equation}
as well as for the free boundaries
\begin{equation}
\l^\eps = \l_0 + \sqrt{\eps}\,\l_1 + \eps\l_2 + \cdots,\qquad u^\eps = u_0 + \sqrt{\eps}\,u_1 + \eps u_2 + \cdots, \label{boundaryexpn}
\end{equation}
and the optimal long-term growth rate
\begin{equation}
\delta^\eps = \delta_0 + \sqrt{\eps}\delta_1 + \cdots, 
\label{eq:delta-expansion}
\end{equation}
which are asymptotic as $\eps\downarrow0$.

Crucial to this analysis is the Fredholm alternative (or centering condition) as detailed in \cite{fpss-book}. In preparation, we will use the notation $\< \cdot\>$ to denote the expectation with respect to the invariant distribution $\Phi$ of the process $Z$, namely
\begin{equation}
\<g\> \define\int g(z)\Phi(dz).
\label{eq:bracket}
\end{equation}
The Fredholm alternative tells us that a Poisson equation of the form 
$$\L_0 v+ \chi   = 0$$
has a solution $v$ only if the solvability condition $\< \chi \>   = 0$ is satisfied, and we refer for instance to \cite[Section 3.2]{fpss-book} for technical details.

It is also convenient to introduce the differential operators
\begin{equation}
D_k = \zeta^k\frac{\partial^k}{\partial\zeta^k}, \qquad k=1,2, \cdots, \label{Dkdef}
\end{equation}
in terms of which the operators $\L_1$ and $\L_2$ in \eqref{eq:Ls} are
$$ \L_1 = \rho f(z) \beta(z) \d_{z}D_1, \qquad \L_2	=  \frac12 f(z)^2D_2 + \mu D_1. $$
In the following, a key role will be played by the squared-averaged volatility $\sigb$ defined by
\begin{equation}
\sigb^2=\<f^2\>. \label{sigbdef}
\end{equation}
The principal terms in the expansions will be related to the {\em constant volatility} transaction costs problem, and we define the operator $\lnt(\sigma;\delta)$ that acts in the no trade region by
\begin{equation}
\lnt(\sigma;\delta) = \frac12 \sigma^2D_2+\mu D_1 -\(1-\gamma\)\delta\cdot\,,\label{lntdef}
\end{equation}
and it is written as a function of the parameters $\sigma$ and $\delta$. 

The zero-order terms in each of the asymptotic expansions \eqref{eq:v-expansion}, \eqref{boundaryexpn} and \eqref{eq:delta-expansion} are known and will be re-derived in Section \ref{sec:buildSol}.
In the rest of this section, we calculate the next terms in the above asymptotic expansion in the case of fast-scale stochastic volatility. 

\subsection{Power expansion inside the NT region}\label{sec:power-expan}
In this subsection we will concentrate on constructing the expansion inside the NT region $\zeta\in\(l^\eps(z),u^\eps(z)\)$, where \eqref{eq:HJB-reduced} holds.
We now insert the expansion \eqref{eq:v-expansion} and match powers of $\eps$. 

The terms of order $\eps^{-1}$ lead to $\L_0 v^{\lam,0}=0$.
Since the $ \L_0 $ operator takes derivatives in $z$, we seek a solution of the form $v^{\lam,0} = v^{\lam,0}(\zeta)$, independent of $z$.

At order $\eps^{-1/2}$, we have  
$\L_1 v^{\lam,0} + \L_0 v^{\lam,1}  = 0$.
But since $\L_1$ takes a derivative in $z$, $\L_1 v^{\lam,0}  = 0$, and so $\L_0 v^{\lam,1}  = 0$.
Again, we seek a solution of the form $v^{\lam,1}=v^{\lam,1}(\zeta)$
that is independent of $z$.

The terms of order one give
$$\(\L_2-\(1-\gamma\)\delta_0\)v^{\lam,0} + \L_1 v^{\lam,1} + \L_0 v^{\lam,2}  = 0.$$
Since we have that $\L_1$ takes derivatives in $z$, and $  v^{\lam,1}  $ is independent of $z$, we have that 
\begin{equation}
\(\L_2-\(1-\gamma\)\delta_0\)v^{\lam,0}  + \L_0 v^{\lam,2}   = 0.\label{v2eqn}
\end{equation}
This is a Poisson equation for $v^{\lam,2}$ with $\< \(\L_2-\(1-\gamma\)\delta_0\) \> v^{\lam,0}   = 0$ as the
solvability condition. We observe that
$$\<\(\L_2-\(1-\gamma\)\delta_0\cdot\)\> =  \lnt(\sigb;\delta_0),$$ 
where $\sigb$ is the square-averaged volatility defined in \eqref{sigbdef}, and $\lnt$ is the constant volatility no trade operator defined in \eqref{lntdef}. Then we have
\begin{equation}
 \lnt(\sigb;\delta_0) v^{\lam,0}   = 0,
\label{eq:L_Mer}
\end{equation}
which, along with boundary conditions we will find in the next subsection, will determine $v^{\lam,0}$.

To find the equation for the next term $v^{\lam,1}$ in the approximation, we proceed as follows. We write the first term of \eqref{v2eqn} as
\begin{align*}
\(\L_2-\(1-\gamma\)\delta_0\)v^{\lam,0}   &= \(\(\L_2-\(1-\gamma\)\delta_0\) - \lnt(\sigb;\delta_0)\) v^{\lam,0}= \frac12\(f^2(z) - \sigb^2\)
D_2v^{\lam,0}.
\end{align*}
The solutions of \eqref{v2eqn} are given by
\begin{align}
v^{\lam,2}=-\L_0^{-1}\L_2v^{\lam,0}=-\frac12\L_0^{-1}\(f^2(z) - \sigb^2\)D_2v^{\lam,0}=-\frac12\ \(\phi(z)+c(\zeta) \)D_2v^{\lam,0},~~~~~
\label{eq:v2}
\end{align}
where $c(\zeta)$ is independent of $z$, and %
$\phi(z)$ is a solution to Poisson equation
\begin{align}
\L_0 \phi(z)   &= f^2(z) - \sigb^2,
\label{eq:Poisson}
\end{align}

Continuing to the order $\sqrt{\eps}$ terms, we obtain
\begin{align*}
\(\L_2-\(1-\gamma\)\delta_0\)v^{\lam,1} + \L_1 v^{\lam,2} + \L_0 v^{\lam,3}-(1-\gamma)\delta_1v^{\lam,0}   &= 0.
\end{align*} 
Once again, this is a Poisson equation for $v^{\lam,3}$ whose centering condition implies that
$$\< \(\L_2-\(1-\gamma\)\delta_0\) \> v^{\lam,1} + \< \L_1 v^{\lam,2} \>-(1-\gamma)\delta_1v^{\lam,0}    = 0.$$
From \eqref{eq:v2}, it follows that
\begin{equation}
\lnt(\sigb;\delta_0) v^{\lam,1} - (1-\gamma)\delta_1v^{\lam,0}   = - \< \L_1 v^{\lam,2} \> = \frac12\<\L_1 \phi\> D_2v^{\lam,0}= \frac12\rho\<\beta f\phi'\>  D_1D_2v^{\lam,0}.
\label{eq:HJB1}  
\end{equation} 
We define
\begin{equation}
V_3 = -\frac12\rho\<\beta f\phi'\>. 
\label{eq:V3}
\end{equation}
Then we write the equation \eqref{eq:HJB1} as 
\begin{equation}
\lnt (\sigb;\delta_0) v^{\lam,1}   = -V_3D_1D_2v^{\lam,0}+(1-\gamma)\delta_1v^{\lam,0}. \label{v1eqn}
\end{equation}

\subsection{Boundary Conditions}\label{sec:bndry}
So far we have concentrated on the PDE \eqref{eq:HJB-reduced} in the NT region.
We now insert the expansions \eqref{eq:v-expansion} and \eqref{boundaryexpn} into the boundary conditions \eqref{lbcn1}--\eqref{ubcn2}.
The terms of order one from \eqref{lbcn1}  and \eqref{lbcn2} give
\begin{equation}
\B v^{\lam,0}\mid_{\l_0}=0, \qquad \mbox{and } \qquad \B' v^{\lam,0}\mid_{\l_0}=0, \label{v0bcnl}
\end{equation}
while the terms of order one from \eqref{lbcn1}  and \eqref{lbcn2} give
\begin{equation}
\S v^{\lam,0}\mid_{u_0}=0, \qquad \mbox{and } \qquad \S' v^{\lam,0}\mid_{u_0}=0, \label{v0bcnu}
\end{equation}
Since $v^{\lam,0}$ is independent of $z$, these equations imply that $\l_0$ and $u_0$ are also independent of $z$ (they are constants).

Taking the order $\sqrt{\eps}$ terms in \eqref{lbcn1} gives
$$ (1+\l_0)\(v_{\zeta} ^{\lam,1}(\l_0)+\l_1v_{\zeta\zeta} ^{\lam,0}(\l_0)\) + \l_1v_{\zeta} ^{\lam,0}(\l_0) -(1-\gamma)\(v^{\lam,1}(\l_0) + \l_1v_{\zeta}^{\lam,0}(\l_0)\)=0.$$
Using the fact that $\B v^{\lam,0}\mid_{\l_0}=0$, we see the terms in $\l_1$ cancel, and we obtain 
\begin{equation}
 \B v^{\lam,1}\mid_{\l_0}=0,\label{eq:boundary3}
 \end{equation}
which is a mixed-type boundary condition for $v^{\lam,1}$ at the boundary $\l_0$.

From the order $\sqrt{\eps}$ terms in \eqref{lbcn2}, we obtain
\begin{equation}
\l_{1}\( v_{\zeta\zeta}^{\lam,0}(\l_0)  +\(1+\l_{0}\)v_{\zeta\zeta\zeta}^{\lam,0}(\l_0) + \gamma v_{\zeta\zeta}^{\lam,0}(\l_0) \)+\[\(1+ \l_{0}\)v_{\zeta\zeta} ^{\lam,1}(\l_0) +\gamma v_{\zeta} ^{\lam,1}(\l_0)\]=0,
\label{eq:boundary3.1}
 \end{equation}
and so, as $v^{\lam,1}$ does not depend on $z$,  $\l_1$ is also a constant (independent of $z$) 
given by
\begin{equation}
\l_1=  -\(\frac{%
\B' v^{\lam,1}\mid_{\l_0}}
{\(1+\l_{0}\)v_{\zeta\zeta\zeta}^{\lam,0}(\l_0) + (1+\gamma)v_{\zeta\zeta}^{\lam,0}(\l_0)}\).
\label{eq:l1}
\end{equation}

Similar calculations can be performed on the (right) sell boundary $\u^\eps\approx \u_{0} + \sqrt{\eps}\u_{1},$ where $\S {\v}=0$. The analogous equations to \eqref{eq:boundary3} and \eqref{eq:l1} are 
\begin{align}
&%
\S v^{\lam,1}\mid_{u_0}=0.
\label{eq:boundary5}
\\
&\u_1=  -\(\frac{
\S' v^{\lam,1}\mid_{u_0}
}{\(\frac{1}{1-\lam}+\u_{0}\)v_{\zeta\zeta\zeta}^{\lam,0}(u_0) + (1+\gamma) v_{\zeta\zeta}^{\lam,0}(u_0)}\).\label{eq:u1}
\end{align}
Note that \eqref{eq:boundary5} is a mixed-type boundary condition for $v^{\lam,1}$ at the boundary $u_0$, and \eqref{eq:u1} determines the constant correction term $u_1$ to the sell boundary.

\subsection{Determination of  $\delta_1$}\label{sec:v1}
The next term $v^{\lam,1}$ in the asymptotic expansion solves the ODE
\eqref{v1eqn}, with boundary conditions \eqref{eq:boundary3} and \eqref{eq:boundary5}, but we also need to find $\delta_1$ which appears in the equation. In fact, the Fredholm solvability condition for this equation determines $\delta_1$, and so we look for the solution $\w$ of the homogeneous adjoint problem. 

To do that we first multiply both sides of \eqref{v1eqn} by $\w$ and integrate from $\l_0$ to $\u_0$:
\begin{equation}
\int_{\l_0}^{\u_0}  \w \lnt v^{\lam,1}  d\zeta   = -V_3\int_{\l_0}^{\u_0}D_1D_2v^{\lam,0}\w d\zeta+(1-\gamma)\delta_1\int_{\l_0}^{\u_0}v^{\lam,0}\w d\zeta. \label{wv}
\end{equation}
Integration by parts gives
\begin{equation}
\int_{\l_0}^{\u_0}  \w\lnt v^{\lam,1}d\zeta = \int_{\l_0}^{\u_0}  v^{\lam,1}\lnt^{*}\w\,d\zeta
+\left[\frac{\sigb^2}{2}v_{\zeta}^{\lam,1}\zeta^2\w - \frac{\sigb^2}{2} (\zeta^2\w)' v^{\lam,1} + \mu\zeta\w v^{\lam,1}\right]_{\l_0}^{\u_0}, \label{eq:by-parts}
\end{equation}
where $\lnt^{*}=\lnt^{*}(\sigb;\delta_0)$ is the adjoint operator to $\lnt$:
\begin{align}
\lnt^{*}(\sigb;\delta_0)(w) =\frac12\sigb^2 \d_{\zeta\zeta} \left(\zeta^2 w\right)- \mu \d_{\zeta}(\zeta w) - (1-\gamma)\delta_0 w.
\label{eq:Merton-adj}
\end{align}
We set $\w$ to satisfy 
\begin{equation}
\lnt^{*}(\sigb;\delta_0)(w)=0, \label{weqn}
\end{equation} 
and, to cancel the boundary terms in \eqref{eq:by-parts}, 
the boundary conditions
\begin{equation}
\l_0 \w'(\l_0)- k_-\w(\l_0)=0, \qquad u_0 \w'(\u_0)-k_+\w(\u_0)=0, 
\label{eq:bnd1}
\end{equation}
where we define the constants
\begin{equation}
k_{\pm} \define (1-\gamma)\pi_{\pm} +  (k-2), \quad \mbox{and} \quad k\define\frac{\mu}{\frac12\sigb^2}. \label{kdefs}
\end{equation}
\begin{lemma}\label{wlemma}
The solution $w$ to the adjoint equation \eqref{weqn} with boundary conditions \eqref{eq:bnd1} is, up to a multiplicative constant, given by
\begin{equation}
w(\zeta) = \zeta^{k-2}v^{\lam,0}(\zeta), \label{wsol}
\end{equation}
where $k$ was defined in \eqref{kdefs}.
\end{lemma}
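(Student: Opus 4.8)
Here is how I would prove Lemma~\ref{wlemma}.

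The plan is to verify \eqref{wsol} directly, the key point being that $\lnt(\sigb;\delta_0)$ is an equidimensional (Euler) operator and that conjugation by the power $\zeta^{k-2}$ carries it into its formal $L^{2}(d\zeta)$-adjoint \eqref{eq:Merton-adj}. Concretely, I will establish the operator identity
\begin{equation}
\lnt^{*}(\sigb;\delta_0)\!\left(\zeta^{k-2}g\right)\;=\;\zeta^{k-2}\,\lnt(\sigb;\delta_0)\,g,\qquad k=\frac{\mu}{\tfrac12\sigb^{2}},\label{conjid}
\end{equation}
valid for every $C^{2}$ function $g$ on $(\l_0,u_0)$. Granting \eqref{conjid}, taking $g=v^{\lam,0}$ and invoking \eqref{eq:L_Mer} gives $\lnt^{*}(\sigb;\delta_0)\!\left(\zeta^{k-2}v^{\lam,0}\right)=0$, which is precisely \eqref{weqn} for $w$ as in \eqref{wsol}.

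To prove \eqref{conjid} I first expand $\lnt^{*}(\sigb;\delta_0)$ from \eqref{eq:Merton-adj} into non-divergence form,
$$\lnt^{*}(\sigb;\delta_0)w=\tfrac12\sigb^{2}\zeta^{2}w_{\zeta\zeta}+\left(2\sigb^{2}-\mu\right)\zeta w_{\zeta}+\left(\sigb^{2}-\mu-(1-\gamma)\delta_0\right)w,$$
substitute $w=\zeta^{k-2}g$, and collect the coefficients of $\zeta^{2}g_{\zeta\zeta}$, $\zeta g_{\zeta}$ and $g$. The coefficient of $\zeta^{2}g_{\zeta\zeta}$ already equals $\tfrac12\sigb^{2}$; the coefficient of $\zeta g_{\zeta}$ equals $\mu$ precisely because the exponent solves $\sigb^{2}(k-2)+2\sigb^{2}-\mu=\mu$, i.e. $k=\mu/(\tfrac12\sigb^{2})$ as in \eqref{kdefs}; and with that value of $k$ a short algebraic simplification shows the coefficient of $g$ collapses to $-(1-\gamma)\delta_0$, yielding \eqref{conjid}. (Equivalently, since both $\lnt$ and $\lnt^{*}$ are Euler operators it suffices to check \eqref{conjid} on the two power solutions $\zeta^{m}$ of $\lnt(\sigb;\delta_0)\zeta^{m}=0$, which is immediate.) This coefficient bookkeeping is the one genuine computation, and the only place I expect to need care.

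It remains to check the boundary conditions \eqref{eq:bnd1}. For $w=\zeta^{k-2}v^{\lam,0}$ one computes
$$\zeta\, w'(\zeta)-k_{\pm}w(\zeta)=\zeta^{k-2}\left(\zeta\, v_{\zeta}^{\lam,0}(\zeta)+\left(k-2-k_{\pm}\right)v^{\lam,0}(\zeta)\right).$$
At $\zeta=\l_0$, the leading-order smooth-pasting condition \eqref{v0bcnl} reads $(1+\l_0)v_{\zeta}^{\lam,0}(\l_0)=(1-\gamma)v^{\lam,0}(\l_0)$, so the parenthesis equals $\left(\frac{(1-\gamma)\l_0}{1+\l_0}+k-2-k_{-}\right)v^{\lam,0}(\l_0)$, which vanishes by the definition of $k_{-}$ in \eqref{kdefs} (equivalently $\pi_{-}=\l_0/(1+\l_0)$). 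The computation at $\zeta=u_0$ is identical, using \eqref{v0bcnu} and the definition of $k_{+}$ (equivalently $\pi_{+}=u_0/(\tfrac1{1-\lam}+u_0)$). Thus $w$ satisfies both conditions in \eqref{eq:bnd1} — which is exactly the cancellation of the boundary terms in \eqref{eq:by-parts} that \eqref{eq:bnd1} was designed to produce.

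Finally, for uniqueness up to a multiplicative constant: the map $g\mapsto\zeta^{k-2}g$ is a linear bijection (with inverse $w\mapsto\zeta^{2-k}w$), and by \eqref{conjid} together with the boundary computation above it carries the solution space of the leading-order eigenvalue problem for $v^{\lam,0}$ — namely $\lnt(\sigb;\delta_0)g=0$ on $(\l_0,u_0)$ with $\B g\mid_{\l_0}=0$ and $\S g\mid_{u_0}=0$, see \eqref{eq:L_Mer} and \eqref{v0bcnl}--\eqref{v0bcnu} — bijectively onto the solution space of \eqref{weqn}--\eqref{eq:bnd1}. Since $\delta_0$ is the eigenvalue for which the former space is one-dimensional and spanned by $v^{\lam,0}$, the latter is one-dimensional and spanned by $w=\zeta^{k-2}v^{\lam,0}$, which is \eqref{wsol}.
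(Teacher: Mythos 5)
Your proposal is correct and follows essentially the same route as the paper: the paper's proof is precisely the observation that substituting $w=\zeta^{k-2}v^{\lam,0}$ into \eqref{weqn} and \eqref{eq:bnd1} reproduces \eqref{eq:L_Mer} and \eqref{v0bcnl}--\eqref{v0bcnu}, which is your conjugation identity and boundary computation written out in full. Your explicit bookkeeping of the coefficients (checking that $\sigb^2k-\mu=\mu$ and that the zeroth-order terms cancel) and the bijection argument for uniqueness up to a constant are accurate elaborations of what the paper leaves implicit.
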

\begin{proof}
Making the substitution \eqref{wsol} into \eqref{weqn} leads to the equation \eqref{eq:L_Mer} satisfied by $v^{\lam,0}$. Similarly inserting \eqref{wsol} into the boundary conditions \eqref{eq:bnd1} leads to the boundary conditions \eqref{v0bcnl} and \eqref{v0bcnu} satisfied by $v^{\lam,0}$. The conclusion follows.
\end{proof}

Now the left hand side of \eqref{wv} is zero, and so we find that $\delta_1$ is given by \begin{align}
\delta_1 = \frac{V_3}{(1-\gamma)}\frac{\int_{\l_0}^{\u_0}\w D_1D_2v^{\lam,0} d\zeta}{\int_{\l_0}^{\u_0}\w v^{\lam,0}d\zeta}.
\label{eq:delta1}
\end{align}
Note that $\delta_1$ is well defined, as the undetermined multiplicative constant of $v^{\lam,0}$ cancels in the ratio.

\subsection{Summary of the Asymptotics}\label{sec:main-fast}
To summarize, we have sought the zeroth and first order terms in the expansions \eqref{eq:v-expansion}, \eqref{boundaryexpn} and \eqref{eq:delta-expansion} for $(v^{\lam,\eps},\l^\eps,u^\eps,\delta^\eps)$.
The principal terms are found from the eigenvalue problem described by ODE \eqref{eq:L_Mer}, with boundary and free boundary conditions \eqref{v0bcnl}-\eqref{v0bcnu}:
\begin{align*}
&  \lnt(\sigb;\delta_0) v^{\lam,0}   = 0, \qquad  \l_0\leq\zeta\leq u_0,\\
& \B v^{\lam,0}\mid_{\l_0}=0, \qquad \mbox{and } \qquad \B' v^{\lam,0}\mid_{\l_0}=0\,;
& \S v^{\lam,0}\mid_{u_0}=0, \qquad \mbox{and } \qquad \S' v^{\lam,0}\mid_{u_0}=0.
\end{align*}
The next term in the asymptotic expansion of the boundaries of the NT region, and of the optimal long-term growth rate $\l_1,\u_1,$ and $\delta_1$ respectively, are given by \eqref{eq:l1}, \eqref{eq:u1} and \eqref{eq:delta1}, and $v^{\lam,1}$
solves the ODE \eqref{v1eqn}, with boundary conditions \eqref{eq:boundary3} and \eqref{eq:boundary5}:
\begin{align*}
& \lnt (\sigb;\delta_0) v^{\lam,1}   = -V_3D_1D_2v^{\lam,0}+(1-\gamma)\delta_1v^{\lam,0}, \qquad  \l_0<\zeta< u_0,\\
& \B v^{\lam,1}\mid_{\l_0}=0, \qquad \mbox{and } \qquad \S v^{\lam,1}\mid_{u_0}=0.
\end{align*}
We describe the essentially-explicit solutions to these problems in the next section.

\section{Building the Solution}\label{sec:buildSol}
In the previous section we have established that $(v^{\lam,0},\delta_0)$ solve the {\em constant volatility} optimal growth rate with transaction costs problem, which is described in \cite{DumasLuciano}, but using the averaged volatility $\sigb$, where $\sigb^2=\langle f^2\rangle$. 
In this section, we review how to find $(v^{\lam,0},\delta_0)$, and then use them to build the stochastic volatility corrections $(v^{\lam,1},\delta_1)$. %

\subsection{Building $v^{\lam,0}$ and $\delta_0$} \label{sec:buildSol1}
We denote by $(V_0(\zeta;\sigma),\Delta_0(\sigma),L_0(\sigma),U_0(\sigma))$ the solution to the constant volatility problem with volatility parameter $\sigma$ and corresponding eigenvalue $\Delta_0$, and so 
\begin{equation}
v^{\lam,0}(\zeta)=V_0(\zeta;\sigb), \quad \delta_0=\Delta_0(\sigb), \quad\mbox{and }\quad (\l_0,u_0)=(L_0(\sigb),U_0(\sigb)). \label{v0sol}
\end{equation}

\begin{assumption}\label{A1}
Without loss of generality assume that $\mu>0$. The case $\mu<0$ can be handled similarly to the current case. The case $\mu=0$ is not interesting, as in this case one would not hold the risky stock at all. 
We also assume that the optimal proportion of wealth invested into the risky stock in case of constant volatility $\sig$ and zero transaction costs is less than 1: 
\begin{equation}
\pi_M:=\frac{\mu}{\gamma\sig^2}<1. \label{mertonbound}
\end{equation}
We will refer to $\pi_M$ as the Merton proportion. %
\end{assumption}
\begin{remark}\label{rem:cases}
It turns out that under the assumption \eqref{mertonbound}, we will have $0<L_0<U_0$. The other two cases when $\mu<0$ and when $\frac{\mu}{\gamma\sig^2}>1$ can be handled similarly.  If $\mu<0$, we would have $L_0<U_0<0$, with $L_0$ being the buy boundary, and $U_0$ the sell boundary. 
If we had $\pi_M>1$, then $L_0<U_0<0$ and $L_0$ is the sell boundary, and $U_0$ the buy boundary. For $\eps>0$ small enough, the same is true for $\l^\eps(z)$ and $u^\eps(z)$. 
The final case when $\pi_M=1$ is not interesting either, since in this case, all wealth will be invested into stock, and no trading will be necessary, except possibly at the initial time.
Also, note that under these assumptions, we are assured that the NT region is non-degenerate.
\end{remark}

In preparation, we define the following quantities. Given $\Delta_0$, we define %
\begin{equation}
\theta_{\pm}=\theta_{\pm}(\Delta_0)%
\mbox{  as the roots of } \frac12\sigma^2\theta^2 + \left( \mu - \frac12\sigma^2\right) \theta  - (1-\gamma)\Delta_0=0,
\label{eq:theta}
\end{equation}
and let 
\begin{equation}
\pi_{\pm}=\pi_{\pm}(\Delta_0) %
\mbox{ be the roots of } \frac12\gamma\sigma^2\pi^2 -\mu\pi +\Delta_0 =0,
\label{eq:pi}
\end{equation}
where in both cases, we will suppress the dependency on $\Delta_0$. Additionally,
let $$L_0 \define \frac{\pi_{-}}{1-\pi_{-}}, \qquad U_0 \define \(\frac1{1-\lam}\)\frac{\pi_{+}}{1-\pi_{+}},$$ where we have again suppressed the dependency on $\Delta_0$, and we also define
\begin{equation} 
k_\l \define \frac{1+{L_0}}{1-\gamma} \qquad \mbox{and } \qquad k_u \define \frac{\frac1{1-\lam}+{U_0}}{1-\gamma}. \label{klku}
\end{equation}

\begin{proposition}\label{prop:fast}
The function $V_0(\zeta)$ is given by
$$ V_0(\zeta)=c_{+}v_+(\zeta) + c_{-}v_-(\zeta), \qquad \mbox{with } \quad c_{\pm} \define  v_\mp({L_0}) - k_\l v_\mp'({L_0}), $$
where, given $(\mu,\sigma, \lambda, \gamma)$, there are two cases:
\begin{description}
\item[Real Case:] The eigenvalue $\Delta_0$ is a real root of the algebraic equation
\begin{equation}
 \left( \frac{\theta_{+}}{\pi_{-}} + \frac{\theta_{-}}{\pi_{+}} -(1-2\gamma) \right){L_0}^{(\theta_+-\theta_-)} - 
\left( \frac{\theta_{+}}{\pi_{+}} + \frac{\theta_{-}}{\pi_{-}} -(1-2\gamma) \right)U_0^{(\theta_+-\theta_-)} =0,\label{transcendreal}
\end{equation}
{\em and} $\theta_\pm(\Delta_0)$ in \eqref{eq:theta} are real and distinct. Then $v_\pm(\zeta) = \zeta^{\theta_\pm}.$

\item[Complex Case:] Otherwise, $\Delta_0$ is the real root of the transcendental equation
\begin{align}
\!\!\!\!\!\!\!\!\!\!\!\!\theta_i\left(\frac{k_\l}{{L_0}} - \frac{k_u}{U_0}  \right)-\left[\left(\frac{k_\l}{{L_0}}\theta_r - 1 \right)\left(\frac{k_u}{U_0}\theta_r - 1\right)+ \frac{\theta_i^2k_uk_\l}{U_0{L_0}}\right]\tan\(\theta_i\log\(\frac{U_0}{L_0}\)\)=0,
\label{eq:complex-eta}
\end{align}
where $\theta_{r}(\Delta_0), \theta_i(\Delta_0)$ the real and the imaginary parts of $\theta_{+}(\Delta_0)$.
Then 
\begin{equation}
 v_+(\zeta)=\zeta^{\theta_r}\cos(\theta_i\log\zeta), \qquad v_-(\zeta)=\zeta^{\theta_r}\sin(\theta_i\log\zeta). \label{vpmcomplex}
 \end{equation}
\end{description}

\end{proposition}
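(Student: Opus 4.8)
The plan is to solve the linear second-order ODE $\lnt(\sigb;\Delta_0)V_0 = 0$ explicitly by seeking power-function solutions, then to impose the four boundary conditions \eqref{v0bcnl}--\eqref{v0bcnu} and show they collapse to the single eigenvalue condition for $\Delta_0$ stated in the proposition, in each of the two cases. Writing out $\lnt(\sigb;\Delta_0)V_0 = \frac12\sigb^2\zeta^2 V_0'' + \mu\zeta V_0' - (1-\gamma)\Delta_0 V_0 = 0$, this is an Euler (equidimensional) equation, so substituting $V_0 = \zeta^\theta$ gives exactly the indicial equation \eqref{eq:theta}: $\frac12\sigb^2\theta^2 + (\mu - \frac12\sigb^2)\theta - (1-\gamma)\Delta_0 = 0$, with roots $\theta_\pm$. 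When the roots are real and distinct we get $V_0 = c_+\zeta^{\theta_+} + c_-\zeta^{\theta_-}$; when they are complex conjugates $\theta_r \pm i\theta_i$ the real solution basis is \eqref{vpmcomplex}, $\zeta^{\theta_r}\cos(\theta_i\log\zeta)$ and $\zeta^{\theta_r}\sin(\theta_i\log\zeta)$. (A double real root would give an extra $\log\zeta$ factor, but this is a codimension-one coincidence and can be subsumed in the real case or noted as non-generic.)

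Next I would handle the boundary conditions. At $\l_0$ the two conditions $\B V_0|_{\l_0}=0$ and $\B' V_0|_{\l_0}=0$ read $(1+\l_0)V_0'(\l_0) = (1-\gamma)V_0(\l_0)$ and $(1+\l_0)V_0''(\l_0) = -\gamma V_0'(\l_0)$; similarly at $u_0$ with $\frac1{1-\lam}+u_0$ in place of $1+\l_0$. The first-order homogeneity of $V_0$ means one overall constant is free, so effectively there are three remaining unknowns — the ratio $c_+/c_-$, and the two free-boundary locations $\l_0, u_0$ — against four equations, with the consistency (solvability) requirement pinning down $\Delta_0$ as the eigenvalue. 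The key algebraic simplification is to recognize that $L_0 = \pi_-/(1-\pi_-)$ and $U_0 = \frac1{1-\lam}\pi_+/(1-\pi_+)$ with $\pi_\pm$ the roots of \eqref{eq:pi}: from the pair of boundary conditions at $\l_0$ one eliminates $c_+/c_-$ to get a relation between $\l_0$ and $\theta_\pm$, and using the identity $1+L_0 = 1/(1-\pi_-)$ together with the connection $\theta_\pm \leftrightarrow \pi_\pm$ induced by the two quadratics \eqref{eq:theta}, \eqref{eq:pi} sharing the parameter $\Delta_0$, the boundary condition at $\l_0$ becomes solvable exactly when $\l_0 = L_0$, and likewise $u_0 = U_0$; this also yields $c_\pm = v_\mp(L_0) - k_\l v_\mp'(L_0)$ after using \eqref{klku}. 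Feeding these back into the remaining condition and clearing denominators produces the transcendental equation \eqref{transcendreal} in the real case and \eqref{eq:complex-eta} in the complex case — in the latter the $\tan$ term appears precisely because $v_\pm$ involve $\cos$ and $\sin$ of $\theta_i\log\zeta$, and the argument $\theta_i\log(U_0/L_0)$ is the phase difference between the two boundaries.

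The main obstacle is the bookkeeping in the elimination: keeping track of which quadratic's roots ($\theta$ or $\pi$) appear where, and verifying the nontrivial identity that links the $\theta$-boundary relation to the explicit formula $L_0 = \pi_-/(1-\pi_-)$. Concretely, one must check that $\frac12\sigb^2\theta^2 + (\mu-\frac12\sigb^2)\theta - (1-\gamma)\Delta_0$ and $\frac12\gamma\sigb^2\pi^2 - \mu\pi + \Delta_0$ are related by the substitution $\pi = \theta/(\theta + 1 - \gamma)$ (or its inverse), which is exactly what makes the smooth-fit conditions at the free boundaries reduce to the stated fractional-linear expressions for $L_0, U_0$; granting that identity, the rest is routine Euler-equation algebra. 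I would also remark, as the proposition implicitly does via its case split, that $\Delta_0$ is determined as the relevant real root (the one giving a nondegenerate NT region $0<L_0<U_0$ under Assumption \ref{A1}), and that the two cases are mutually exclusive and exhaustive since $\theta_\pm$ are real iff the discriminant $(\mu-\frac12\sigb^2)^2 + 2\sigb^2(1-\gamma)\Delta_0 \ge 0$.
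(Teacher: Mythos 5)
Your overall architecture is right --- Euler equation, indicial roots $\theta_\pm$ from \eqref{eq:theta}, the real/complex solution bases, and four smooth-pasting conditions set against the ratio $c_+/c_-$, the two boundary locations and the eigenvalue $\Delta_0$ --- but the step you yourself flag as the crux is broken. The two quadratics \eqref{eq:theta} and \eqref{eq:pi} are \emph{not} related by the substitution $\pi=\theta/(\theta+1-\gamma)$, nor by any fractional-linear change of variable: at $\lam=0$ one has $\Delta_0=\dmax$ and the $\pi$-quadratic acquires the double root $\pi_+=\pi_-=\pi_M$, while the $\theta$-quadratic (discriminant $(\mu-\half\sigb^2)^2+2\sigb^2(1-\gamma)\dmax$) generically still has two distinct roots, so no injective substitution can carry $\theta_\pm$ to $\pi_\pm$. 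Consequently the proposed route to $L_0=\pi_-/(1-\pi_-)$ and $U_0=\frac{1}{1-\lam}\,\pi_+/(1-\pi_+)$ --- eliminate $c_+/c_-$ from the two conditions at $\l_0$ and then translate the resulting $\theta$-relation into a $\pi$-relation via that correspondence --- does not go through as described.

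The missing idea is simpler and bypasses the $\theta$'s entirely at this stage: since $V_0$ satisfies the ODE \eqref{v0eqn} up to the boundary, evaluate the ODE \emph{at} $L_0$ and use the two smooth-pasting conditions \eqref{eq:boundary1}--\eqref{eq:boundary2} to write $V_0'(L_0)=\frac{1-\gamma}{1+L_0}V_0(L_0)$ and $V_0''(L_0)=-\frac{\gamma(1-\gamma)}{(1+L_0)^2}V_0(L_0)$; substituting and dividing by $(1-\gamma)V_0(L_0)\neq0$ yields exactly $\frac12\gamma\sigma^2\pi_-^2-\mu\pi_-+\Delta_0=0$ with $\pi_-=L_0/(1+L_0)$, and the identical computation at $U_0$ shows $\pi_+=U_0/(\frac1{1-\lam}+U_0)$ is the other root of the same quadratic. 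With the boundaries so expressed, the eigenvalue equation is the vanishing of the determinant of the $2\times2$ system obtained from the two \emph{first-order} conditions \eqref{eq:boundary1} and \eqref{eq:boundaryu1} acting on $(c_+,c_-)$ (which also produces the stated choice $c_\pm=v_\mp(L_0)-k_\l v_\mp'(L_0)$); substituting $v_\pm=\zeta^{\theta_\pm}$, resp.\ \eqref{vpmcomplex}, reduces this determinant condition to \eqref{transcendreal}, resp.\ \eqref{eq:complex-eta}. Finally, the exclusion of a repeated real root $\theta_+=\theta_-$ should be argued from the nondegeneracy of the NT region (Remark \ref{rem:cases}) rather than dismissed as a non-generic coincidence.
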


\begin{proof}
We have that $V_0$ solves the ODE $\lnt(\sigma;\Delta_0)V_0=0$ in the NT region: %
\begin{equation}
\frac12\sigma^2\zeta^2V_0'' + \mu \zeta V_0'  -(1-\gamma)\Delta_0 V_0=0, \qquad 0<{L_0}\leq\zeta\leq U_0, \label{v0eqn}
\end{equation}
with boundary conditions 
\begin{align}
\(1+ {L_0} \)V_0'({L_0})  - (1-\gamma)V_0({L_0}) &=0,\label{eq:boundary1}\\
\(1+ {L_0} \)V_0''({L_0})  + \gamma V_0'({L_0}) &=0 \label{eq:boundary2}
\end{align}
at the lower boundary ${L_0}$ and analogous conditions  
\begin{align}
\(\frac{1}{1-\lam} + U_0 \)V_0'(U_0)  - (1-\gamma)V_0(U_0) &=0,\label{eq:boundaryu1}\\
\(\frac{1}{1-\lam}+ U_0 \)V_0''(U_0)  + \gamma V_0'(U_0) &=0\label{eq:boundaryu2}
\end{align}
at the upper boundary $U_0$. 
This is a free-boundary problem with two undetermined boundaries and two conditions on each
boundary. We note that $V_0\equiv0$ is a solution of \eqref{v0eqn} and the boundary conditions \eqref{eq:boundary1}--\eqref{eq:boundaryu2}, but that the long-term growth rate $\Delta_0$ also has to be found. In fact it will be determined as an eigenvalue that eliminates the trivial solution.

First, substituting from \eqref{eq:boundary1} and \eqref{eq:boundary2} into \eqref{v0eqn} at ${L_0}$, we have  
\begin{align}
-\frac12\sigma^2(1-\gamma)\gamma\frac{{L_0}^2}{(1+{L_0})^2} V_0 + \mu(1-\gamma)\frac{{L_0}}{1+{L_0}}V_0 - (1-\gamma)\Delta_0 V_0=0.
\label{eq:pi2}
\end{align}
Then for a non-trivial $V_0$, equation \eqref{eq:pi2} becomes the quadratic equation
\begin{equation}
\frac12\gamma\sigma^2\pi_{-}^2 -\mu\pi_{-} +\Delta_0 =0, \qquad \mbox{where } \quad \pi_{-}\define \frac{{L_0}}{1+{L_0}}.
\label{eq:pim}
\end{equation}
By substituting \eqref{eq:boundaryu1} and \eqref{eq:boundaryu2} into \eqref{v0eqn} at ${U_0}$, we obtain the same equation 
\begin{equation}
\frac12\gamma\sigma^2\pi_{+}^2 -\mu\pi_{+} +\Delta_0 =0, \qquad \mbox{where } \quad \pi_{+} \define \frac{{U_0}}{\frac1{1-\lam}+{U_0}}.
\label{eq:pip}
\end{equation}
That is, $\pi_{\pm}$ are the two roots of the same quadratic \eqref{eq:pi}.

Next, let $v_+(\zeta)$ and $v_-(\zeta)$ be the two independent solutions of the second-order ODE \eqref{v0eqn}, so that the general solution is 
\begin{equation}
V_0=c_{+}v_+ + c_{-}v_-, \label{v0formula}
\end{equation}
for some constants $c_{\pm}$. Inserting this form into the boundary conditions \eqref{eq:boundary1} and \eqref{eq:boundaryu1}, and using the definitions in \eqref{klku} gives the linear system
\begin{equation}
M\left(\begin{array}{c}
c_{+}\\c_{-}\end{array}\right) = 0, \qquad \mbox{where }\quad M=
\left(\begin{array}{ll}
v_+({L_0}) - k_\l v_+'({L_0}) & v_-({L_0}) - k_\l v_-'({L_0})\\
v_+(u) - k_uv_+'(u) & v_-(u) - k_uv_-'(u)
\end{array}\right).\label{mtx1}
\end{equation}
Then, for a non-trivial solution, we require the determinant of $M$ to be zero, which leads to
\begin{align}
&\(v_+({L_0}) - k_\l v_+'({L_0})\)\(v_-(U_0) - k_uv_-'(U_0)\) \\
&- \(v_-({L_0}) - k_\l v_-'({L_0})\)\(v_+(U_0) - k_uv_+'(U_0)\) = 0. \label{determinanteqn}
\end{align}
Note that \eqref{determinanteqn} is an algebraic equation for the optimal long-term growth rate constant $\Delta_0$, where each term in the expression depends on $\Delta_0$ through \eqref{v0eqn}, \eqref{eq:pim} and \eqref{eq:pip}.

As $V_0$ will only be determined up to a multiplicative constant, we can choose 
\begin{equation}
c_{+} \define v_-({L_0}) - k_\l v_-'({L_0}),\qquad c_{-} \define -\left(v_+({L_0}) - k_\l v_+'({L_0})\right). \label{cpm}
\end{equation}
The solutions of \eqref{v0eqn} can be written as powers: $v_\pm(\zeta) = \zeta^{\theta_{\pm}},$ with $\theta_{\pm}$ defined in \eqref{eq:theta}.
If at the eigenvalue $\Delta_0$, the roots $\theta_\pm$ are real and distinct, %
then the transcendental equation \eqref{determinanteqn} can be written as \eqref{transcendreal}.

If the roots are complex at the eigenvalue $\Delta_0$, then the real-valued solutions of \eqref{v0eqn} are those given in \eqref{vpmcomplex}, 
where $\theta_{r,i}$ are the real and imaginary parts of $\theta_+$. Then, after some algebra, \eqref{determinanteqn} transforms to \eqref{eq:complex-eta}. %
It cannot happen that the two roots are real and equal, $\theta_{+} = \theta_{-}$, since this will contradict our conclusion in Remark \ref{rem:cases} that the NT region is non-degenerate.
\end{proof}

In the zero transaction cost case $\lambda=0$, the no-trade region collapses and ${L_0}=U_0$, which implies from \eqref{eq:pim} and \eqref{eq:pip} that $\pi_+=\pi_-=\pi_M$, the Merton ratio, and that
\begin{equation}
\Delta_0 = \dmax:=\frac{\mu^2}{2\gamma\sigma^2}. \label{dmaxdef}
\end{equation}
For $\lambda>0$ and small enough, we expect $\Delta_0$ to be close (and smaller than) $\dmax$, and so $\pi_{\pm}$ in \eqref{eq:pi} are real. Moreover, we can expect whether we are in the real or complex case to be determined by the discriminant of the quadratic equation \eqref{eq:theta} for $\theta$, namely
\begin{equation}
\theta_{\mbox{\tiny{disc}}}(\Delta_0) = (k-1)^2 - 4k_1\Delta_0, \qquad k:=\frac{\mu}{\half\sigma^2}, \quad k_1:= \frac{-(1-\gamma)}{\half\sigma^2}.%
\end{equation}
This reveals the following cases, as described in \cite[Lemma 3.1]{guasoni}:
\begin{description}
\item[Case I:] If $\gamma<1$, then $k_1<0$ and $\theta_{\mbox{\tiny{disc}}}(\dmax)>0$, and we will be in the case of real $\theta_\pm$ for $\lambda$ small enough.
\item[Case II:] If $\gamma>1$, then $k_1>0$ and
$ \theta_{\mbox{\tiny{disc}}}(\dmax) = \frac{1}{\gamma}k^2 -2k+1,$
and so we will be in the complex case if
$k\in(\gamma-\sqrt{\gamma(\gamma-1)}, \gamma+\sqrt{\gamma(\gamma-1)})$, and in the real case if $k$ is outside that interval.
\end{description}

\begin{remark}
Additionally, it is shown in \cite{GMGS} that the gap function $\lamt$ defined by 
$\Delta_0=\frac{\mu^2-\lamt^2}{2\gamma\sigma^2}$ %
has the following asymptotic approximation as $\lambda\downarrow0$:
\begin{align}
&\lamt = \gamma\sigma^2\left(\frac{3}{4\gamma}\left(\frac{\mu}{\gamma\sigma^2}\right)^2\left(1-\frac{\mu}{\gamma\sigma^2}\right)^\frac13\right)\lam^\frac13+O(\lam^\frac23). \label{eq:lamt}
\end{align}
However, even though the asymptotic approximation \eqref{eq:lamt} is very accurate for small transaction costs $\lam$, we have used the numerical solution of \eqref{transcendreal} or \eqref{eq:complex-eta} in both cases that the roots $\theta_{\pm}$ are real and complex respectively.
\end{remark}

\subsection{Finding $v^{\lam,1}$ and $\delta_1$} \label{sec:var_param}
In the previous section, we detailed the solution to the constant volatility problem, from which $(v^{\lam,0},\delta_0,\l_0,u_0)$ are found by formulas \eqref{v0sol} using the averaged volatility $\sigb$. In the next proposition we give expressions for $v^{\lam,1}$ and $\delta_1$. In preparation, we define the following constants which will be used in the formulas for the complex $\theta_\pm$ case:
\begin{align}
&\Theta \define \left(\begin{matrix} \theta_r & \theta_i\\ -\theta_i & \theta_r \end{matrix}\right), 
\quad \vec{c}\define\left(\begin{matrix} c_+ \\ c_- \end{matrix}\right) ,\qquad \vec{q} \define \left(\begin{matrix} q_+ \\ q_- \end{matrix}\right)\define\left(\Theta^3-\Theta^2\right)\vec{c}, 
\label{eq:Theta-c}\\
&\vec{\hat{c}} \define \left(\begin{matrix} c_+ \\ -c_- \end{matrix}\right), \quad\vec{\check{q}} \define \left(\begin{matrix} q_- \\ q_+ \end{matrix}\right) ,
\quad \vec{\check{c}} \define \left(\begin{matrix} c_- \\ c_+ \end{matrix}\right),
\qquad 
\vec{\tilde q} \define  \left(\begin{matrix} \tilde q_+ \\ \tilde q_- \end{matrix}\right) \define -\frac{V_3}{\half\sigb^2}\vec{q} + \frac{(1-\gamma)\delta_1}{\half\sigb^2}\vec{c} .\label{eq:vec-q} 
\end{align}

\begin{proposition}\label{prop:delta1v1}
If $\theta_{\pm}(\delta_0)$ are real then 
\begin{align}
\delta_1 = \frac{V_3}{1-\gamma}
\frac{L_+c_{+}^2\left(\u_0^{\Delta\theta} - \l_0^{\Delta\theta}\right) - L_-c_{-}^2\left(\u_0^{-\Delta\theta} - \l_0^{-\Delta\theta}\right) + c_+c_-\Delta\theta(L_++L_-)\log\frac{\u_0}{\l_0}}
{c_{+}^2\left(\u_0^{\Delta\theta} - \l_0^{\Delta\theta}\right) - c_{-}^2\left(\u_0^{-\Delta\theta} - \l_0^{-\Delta\theta}\right) + 2c_{+}c_{-}\Delta\theta\log\frac{\u_0}{\l_0}}, ~~
\label{eq:delta1-real}
\end{align}
where $\Delta\theta:=\theta_+ - \theta_-$, and 
$L_{\pm}:=(\theta_{\pm}-1) \theta_{\pm}^2.$

Moreover, $v^{\lam,1}$ is determined up to an additive multiple of $v^{\lam,0}$ by
\begin{align}
v^{\lam,1} &=
C_+\zeta^{\theta_{+}} - \tilde c_{+}\zeta^{\theta_{+}}\log\zeta + \tilde c_{-}\zeta^{\theta_{-}}\log\zeta + \xi\left(c_{+}\zeta^{\theta_{+}} + c_{-}\zeta^{\theta_{-}} \right),\label{eq:v1}
\end{align}
for any $\xi\in\R$, where $C_{+}$ and $\tilde c_{\pm}$ are given by 
\begin{align}
\tilde c_{\pm} &\define-\frac{c_{\pm}\left((1-\gamma)\delta_1 -V_3 L_\pm\right)}{\frac12\sigb^2\Delta\theta}\label{eq:tilde-cpm},\\
C_{+} &\define \frac{\tilde c_{-}\left(\l_0\log \l_0 - k_\l(1+\theta_-\log \l_0)\right)}
{k_\l\theta_{+}{\l_0}^{\Delta\theta} -{\l_0}^{\Delta\theta+1 }}
- \frac{\tilde c_{+}\left(\l_0\log \l_0 - k_\l(1+\theta_+\log \l_0)\right) }{k_\l\theta_{+} -{\l_0}}.~~~\label{eq:C+-real}
\end{align}

Otherwise, if $\theta_{\pm}$ are complex  with real and imaginary parts $(\theta_r,\theta_i)$, then 
\begin{equation}
\delta_1 = \frac{V_3}{1-\gamma}\left(
\frac{\left[\half(\vec{\hat{c}}^T\vec{q})\sin(2\theta_i\eta) + (\vec{c}^T\vec{q})\theta_i\eta - 
\half(\vec{c}^T \vec{\check{q}} )\cos(2\theta_i\eta)\right]_{\eta=\log\l_0}^{\log u_0}}{\left[\half(\vec{\hat{c}}^T\vec{c})\sin(2\theta_i\eta) + (\vec{c}^T\vec{c})\theta_i\eta - 
\half(\vec{c}^T\vec{\check{c}})\cos(2\theta_i\eta)\right]_{\eta=\log\l_0}^{\log u_0}}
\right), \label{delta1complex}
\end{equation}
where we use the definitions in \eqref{eq:Theta-c}-\eqref{eq:vec-q}.
In this case, 
$ v^{\lam,1}$ is determined up to an additive multiple of $v^{\lam,0}$
\begin{align}
 v^{\lam,1} &= A_+(\zeta)v_+(\zeta) + A_-(\zeta)v_-(\zeta) + C_+v_+(\zeta)  + \xi v^{\lam,0}(\zeta),\label{eq:v1-complex}
\end{align}
for any $\xi\in\R$, where $v_\pm$ are given by \eqref{vpmcomplex}, and  
\begin{align}
&C_+ \define -\frac{(1+\l_0)(A_+v_++A_-v_-)'(\l_0) - (1-\gamma)(A_+v_++A_-v_-)(\l_0)}{(1+\l_0)v_+'(\l_0) - (1-\gamma)v_+(\l_0)}, \label{eq:C-plus-complex}\\
&A_{\pm}(\zeta)\define \mp\frac{\tilde q_{\mp}}{2\theta_{i}}\log\zeta +\frac{\tilde q_{\mp}} {4\theta_{i}^2}\sin(2\theta_{i}\log\zeta)\pm \frac{\tilde q_{\pm}} {4\theta_{i}^2}\cos(2\theta_{i}\log\zeta).\label{eq:A-plus-complex}
 \end{align}

\end{proposition}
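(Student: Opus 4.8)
The plan is to solve the inhomogeneous linear ODE \eqref{v1eqn} by variation of parameters, using the two fundamental solutions $v_\pm$ of the homogeneous operator $\lnt(\sigb;\delta_0)$ that were identified in Proposition \ref{prop:fast}, and simultaneously to enforce the two mixed boundary conditions \eqref{eq:boundary3} and \eqref{eq:boundary5}. Since $v^{\lam,0}$ itself solves the homogeneous equation and (by \eqref{v0bcnl}--\eqref{v0bcnu}) both boundary conditions, the solution $v^{\lam,1}$ can only be pinned down modulo the one-parameter family $\xi v^{\lam,0}$; this is exactly the ambiguity recorded by the free parameter $\xi\in\R$ in \eqref{eq:v1} and \eqref{eq:v1-complex}, and it is harmless because all physically meaningful quantities (the boundaries $\l_1,u_1$ via \eqref{eq:l1},\eqref{eq:u1}, and $\delta_1$ via \eqref{eq:delta1}) are invariant under $v^{\lam,1}\mapsto v^{\lam,1}+\xi v^{\lam,0}$.

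First I would recall that $\lnt(\sigb;\delta_0)$ is an Euler (equidimensional) operator, so the right-hand side of \eqref{v1eqn}, namely $-V_3 D_1D_2 v^{\lam,0}+(1-\gamma)\delta_1 v^{\lam,0}$, is a combination of the monomials $\zeta^{\theta_\pm}$ (real case) or $\zeta^{\theta_r}\cos(\theta_i\log\zeta),\zeta^{\theta_r}\sin(\theta_i\log\zeta)$ (complex case), because $D_1D_2\zeta^\theta=\theta^2(\theta-1)\zeta^\theta$, which explains the constants $L_\pm=(\theta_\pm-1)\theta_\pm^2$. Since the forcing term is resonant — it lies in the kernel of $\lnt(\sigb;\delta_0)$ — the particular solution acquires the logarithmic factors $\zeta^{\theta_\pm}\log\zeta$ in the real case, with coefficients $\tilde c_\pm$ obtained by plugging $\zeta^{\theta_\pm}\log\zeta$ into the Euler operator and matching; in the complex case the same resonance produces the $A_\pm(\zeta)$ of \eqref{eq:A-plus-complex}, which I would derive by the standard variation-of-parameters integrals $A_\pm'=\mp v_\mp\,g/(\sigb^2\zeta^2 W)$ with $W$ the Wronskian, using $\int\zeta^{-1}\cos(2\theta_i\log\zeta)\,d\zeta$ and $\int\zeta^{-1}\sin(2\theta_i\log\zeta)\,d\zeta$. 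At this stage $\delta_1$ is still an unknown constant sitting inside the forcing term, i.e. inside $\tilde c_\pm$ (resp. $\vec{\tilde q}$).

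Next I would impose the boundary condition $\B v^{\lam,1}|_{\l_0}=0$ to solve for the remaining homogeneous coefficient $C_+$ in front of $v_+$: since $v^{\lam,0}=c_+v_++c_-v_-$ already satisfies $\B v^{\lam,0}|_{\l_0}=0$, one of the two homogeneous constants is free (that is $\xi$), and the other, $C_+$, is fixed by requiring the $\B$-value at $\l_0$ of the particular (log) part plus $C_+v_+$ to vanish — giving \eqref{eq:C+-real} and \eqref{eq:C-plus-complex}. Finally, the remaining condition $\S v^{\lam,1}|_{u_0}=0$ is the solvability (Fredholm) condition: rather than imposing it directly, I would invoke the identity \eqref{eq:delta1} already established in Section \ref{sec:v1} — obtained by pairing \eqref{v1eqn} against the adjoint solution $w=\zeta^{k-2}v^{\lam,0}$ of Lemma \ref{wlemma} — and then simply evaluate the two integrals $\int_{\l_0}^{u_0} w\,D_1D_2 v^{\lam,0}\,d\zeta$ and $\int_{\l_0}^{u_0} w\,v^{\lam,0}\,d\zeta$ explicitly. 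With $w=\zeta^{k-2}v^{\lam,0}$, $k-2=\theta_++\theta_--1$ (from Vieta on \eqref{eq:theta}), and $v^{\lam,0}=c_+\zeta^{\theta_+}+c_-\zeta^{\theta_-}$, the integrands become integer/log powers of $\zeta$ whose antiderivatives produce precisely the $\zeta^{\pm\Delta\theta}$ and $\log(u_0/\l_0)$ terms in \eqref{eq:delta1-real}; the complex case is the analogous computation written in the $2\times2$ matrix notation \eqref{eq:Theta-c}--\eqref{eq:vec-q}, where $\vec q=(\Theta^3-\Theta^2)\vec c$ encodes $D_1D_2$ acting on $v_\pm$.

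The main obstacle is bookkeeping rather than conceptual: the resonant variation-of-parameters computation in the complex case is delicate because $D_1D_2$ mixes $\cos$ and $\sin$ (it acts as the matrix $\Theta^3-\Theta^2$ on the coefficient vector), so one must track which of $\vec q,\vec{\check q},\vec{\hat c},\vec{\check c}$ appears after each integration by parts, and the trigonometric integrals $\int \zeta^{-1}\cos(2\theta_i\log\zeta)d\zeta$ etc. must be combined correctly to land on the closed forms \eqref{delta1complex} and \eqref{eq:A-plus-complex}. I would organize this by first doing the real case cleanly, then obtaining the complex case either by analytic continuation $\theta_-\to\bar\theta_+$ and taking real/imaginary parts, or by repeating the argument in the $\Theta$-matrix notation, and checking at the end that both expressions for $\delta_1$ reduce to the same limit as $\theta_i\to0$.
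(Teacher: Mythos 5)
Your proposal follows essentially the same route as the paper's proof in Appendix \ref{propproof}: variation of parameters with the fundamental solutions $v_\pm$ (producing the resonant $\log$ terms), the constant $C_+$ fixed by the boundary condition at $\l_0$ with the condition at $u_0$ absorbed into the Fredholm solvability choice of $\delta_1$, and $\delta_1$ itself obtained by explicitly evaluating the two integrals in \eqref{eq:delta1} with $w=\zeta^{k-2}v^{\lam,0}$. One minor slip: Vieta on \eqref{eq:theta} gives $\theta_++\theta_-=1-k$, so $k-2=-1-(\theta_++\theta_-)$ rather than $\theta_++\theta_--1$; it is this corrected relation that makes the cross terms in the integrands proportional to $\zeta^{-1}$ and hence produces the $\log(u_0/\l_0)$ terms you anticipate.
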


The proof is given in Appendix \ref{propproof}.

These expressions allow us to compute $\l_1$ and $\u_1$ from \eqref{eq:l1} and \eqref{eq:u1} respectively. Note that the arbitrary multiple of $v^{\lam,0}$ has no influence in these expressions, because of the zero boundary conditions \eqref{eq:boundary1} and \eqref{eq:boundaryu1}.
\section{Slow-scale Asymptotics}\label{sec:buildSolSlow}
We now consider another stochastic volatility approximation, but this time with slow-scale stochastic volatility:
\begin{align}
\frac{dS_t}{S_t} &= (\mu+r)\,dt + f(Z_t)\,dB^1_t\\
dZ_t
		&=		{\eps} \alpha(Z_t) \, dt + {\sqrt{\eps}}\beta( Z_t ) \, dB_t^2 , 
		\label{eq:Z-slow}
\end{align}
where the Brownian motions $(B^1,B^2)$ have correlation structure $d\< B^1, B^2 \>_t=\rho\,dt$.
As described in \cite{fpss-book}, there is empirical evidence for both a fast and slow scale in market volatility. Here we treat the optimal investment with transaction costs problem separately under each scenario for simplicity of exposition. The two approaches can be considered as different approximations for understanding the joint effects of stochastic volatility and costs of trading.

Then the analog of the HJB equation \eqref{eq:HJB-reduced} is
\begin{align}
&\max\left\{ \( \M_0  -\(1-\gamma\)\delta^\eps\cdot \)+{\sqrt{\eps}}\M_1 + \eps\M_2,~\B,~ \S   \right\} \v =0,\label{eq:HJB-reduced-slow}
\end{align}
where 
$$\M_2 = \frac{1}{2}\beta(z)^2 \d_{zz}^2+ \alpha(z) \d_z,  \qquad %
\M_1 = \rho f(z) \beta(z)  \d_{z}D_1, \qquad %
\M_0	=  \frac12 f(z)^2D_2 +  \mu D_1,$$  %
and the operators $D_k$ were defined in \eqref{Dkdef}. 
The definitions of the buy and sell operators $\B,\S$ stay the same as in \eqref{eq:B}-\eqref{eq:S}. %
Similarly, we will work with the free-boundary eigenvalue formulation in Section \ref{freebf}, so that the analog of \eqref{NTeqn} in the no trade region is
\begin{equation}
\left(\(\M_0  -\(1-\gamma\)\delta^\eps\cdot \)+{\sqrt{\eps}}\M_1 + \eps\M_2\right) {\v} =0,\qquad
\zeta\in(\l^\eps(z),u^\eps(z)),\label{NTeqn-slow}
\end{equation}
with boundary conditions \eqref{lbcn1}--\eqref{ubcn2}.

We look for an expansion for the value function
\begin{equation}
\v =  v^{\lam,0} + \sqrt{\eps}\, v^{\lam,1} + \eps  v^{\lam,2} + \cdots,   \label{eq:v-expansion-slow}
\end{equation}
as well as for the free boundaries
\begin{equation*}
\l^\eps = \l_0 + \sqrt{\eps}\,\l_1 + \eps\l_2 + \cdots,\qquad u^\eps = u_0 + \sqrt{\eps}\,u_1 + \eps u_2 + \cdots, %
\end{equation*}
and the optimal excess growth rate
$\delta^\eps = \delta_0 + \sqrt{\eps}\delta_1 + \cdots, $
which are asymptotic as $\eps\downarrow0$.

\subsection{Power expansion inside the NT region}\label{sec:power-expan-slow}
We proceed similarly to Section \ref{sec:power-expan} and analyze \eqref{NTeqn-slow} inside the NT region. 
The terms of order one in \eqref{NTeqn-slow} are
\begin{equation}
\(\M_0  -\(1-\gamma\)\delta_0\cdot \)v^{\lam,0} =0.\label{NTeqn-slow1}
\end{equation}
The operators $(\B,\S,\B',\S')$ in the boundary conditions \eqref{lbcn1}--\eqref{ubcn2} do not depend on $\eps$ and so the expansions in Section \ref{sec:bndry} are the same in the slow case as in the fast. Therefore, for the zeroth order problem, we have \eqref{v0bcnl} and \eqref{v0bcnu}.
This is the constant volatility problem with volatility $\sigma=f(z)$, that is frozen at today's level. Therefore we have
\begin{equation}
v^{\lam,0}(\zeta,z)=V_0(\zeta;f(z)), ~~ \delta_0=\Delta_0(f(z)),  ~~\mbox{and }~~ (\l_0(z),u_0(z))=(L_0(f(z)),U_0(f(z))), \label{v0solslow}
\end{equation}
where $(V_0,\Delta_0,L_0,U_0)$ are the solution described in Section \ref{sec:buildSol1}. To simplify notation, we will typically cease to write the argument $f(z)$, and simply write $(V_0,\Delta_0,L_0,U_0)$.

The $O(\sqrt{\eps})$ terms give
\begin{align}
\M_0 v^{\lam,1} - (1-\gamma)\delta_0v^{\lam,1} + \M_1 v^{\lam,0} - (1-\gamma)\delta_1v^{\lam,0}  &=0,
\end{align}
where as before $\delta_1$ remains to be found. 
Therefore, we get that the equation for $v^{\lam,1}$ is
\begin{align}
\lnt(f(z);\delta_0) v^{\lam,1} = - \M_1 v^{\lam,0} + (1-\gamma)\delta_1v^{\lam,0}. 
\label{v1eqn-slow}
\end{align}
where $\lnt$ was defined in \eqref{lntdef}. As the boundary condition expansions from Section \ref{sec:bndry} are the same for the slow-scale volatility case, we have the boundary conditions \eqref{eq:boundary3} and \eqref{eq:boundary5} for $v^{\lam,1}$. 

\subsection{Computation of  $\delta_1$}

As in Section \ref{sec:var_param}, we set $\w$ to be the solution of the adjoint equation \eqref{weqn} with boundary conditions \eqref{eq:bnd1}, but with $\sigb$ replaced by $f(z)$. As before, Lemma \ref{wlemma} carries through with the new notation and we recall $w(\zeta,z) = \zeta^{k(z)-2}v^{\lam,0}(\zeta,z),$ with $ k(z)\define\frac{\mu}{\half f(z)^2}$ similar to \eqref{kdefs}. Multiplying both sides of \eqref{v1eqn-slow} by $\w$ and integrating from $\l_0$ to $\u_0$
we get %
$$\delta_1(z) = \frac{ \int_{\l_0}^{\u_0}\w \M_1v^{\lam,0} d\zeta}{(1-\gamma)\int_{\l_0}^{\u_0}\w v^{\lam,0}d\zeta} = \frac{ V_1(z)\int_{L_0}^{U_0}\w D_1\d_\sigma V_0\,d\zeta}{(1-\gamma)\int_{L_0}^{U_0}\w V_0\,d\zeta},
$$
where $V_0(\zeta;f(z)), L_0(f(z))$, and $U_0(f(z))$ are the constant volatility solution, and we define \begin{align}
V_1(z) \define \rho f(z) f'(z)\beta(z).
\label{eq:V3-slow}
\end{align}

We observe that we need to compute $\vega\define\d_\sigma V_0$, which depends on the eigenvalue $\Delta_0$, and so we will need the derivative $\dpr\define\d_\sigma \Delta_0$. Computing the ``Vega'' of the constant volatility ``value function'' is difficult to perform analytically because $\sigma$ appears in numerous places in the solution constructed in Section \ref{sec:buildSol1}: the transcendental equation 
\eqref{transcendreal} or \eqref{eq:complex-eta} for $\Delta_0$, the quadratic equations \eqref{eq:theta} and \eqref{eq:pim} for $\theta_\pm$ and $\pi_\pm$ and consequently in $L_0$ and ${U_0}$. Numerically, computing a finite difference approximation is simple, but we will need to do so at many values of $\zeta$ for use in certain integrals in the asymptotic correction in the next section. Therefore it is of interest to relate it to derivatives (or Greeks) of $v^{\lam,0}$ in the $\zeta$ variable, in particular the Gamma $V_0''=\partial^2_{\zeta\zeta}V_0$, which we have found to be amenable to computation in the fast asymptotics.

First we give an expression for $\dpr$ which avoids numerical differentiation of the eigenvalue problem. 
\begin{lemma}\label{dprlemma}
The derivative $\dpr$ is given by the following ratio of integrals:
\begin{equation}
 \dpr = \frac{\sigma\int_{L_0}^{{U_0}}wD_2{V_0}\,d\zeta}{(1-\gamma)\int_{L_0}^{{U_0}}w{V_0}\,d\zeta}. \label{dpreqn}
 \end{equation}
\end{lemma}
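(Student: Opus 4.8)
The plan is to differentiate the constant-volatility free-boundary eigenvalue problem of Section~\ref{sec:buildSol1} with respect to $\sigma$ and then apply the Fredholm solvability argument of Section~\ref{sec:v1}. Write $(V_0,\Delta_0,L_0,U_0)=(V_0(\zeta;\sigma),\Delta_0(\sigma),L_0(\sigma),U_0(\sigma))$ for the solution of \eqref{v0eqn}--\eqref{eq:boundaryu2}. By the implicit function theorem applied to the transcendental equation determining $\Delta_0$, together with the non-degeneracy of the NT region recorded in Remark~\ref{rem:cases}, these objects depend differentiably on $\sigma$, so, once a $\sigma$-independent normalization of $V_0$ is fixed, $\vega:=\d_\sigma V_0$ and $\dpr:=\d_\sigma\Delta_0$ are well defined.

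First, I would differentiate the ODE \eqref{v0eqn} in $\sigma$. Since $\d_\sigma$ commutes with $\d_\zeta$ and $\d_\sigma(\tfrac12\sigma^2)=\sigma$, this gives
\begin{equation*}
\lnt(\sigma;\Delta_0)\,\vega = (1-\gamma)\dpr\,V_0 - \sigma\,D_2 V_0 ,
\end{equation*}
with $\lnt$ as in \eqref{lntdef} and $D_2 V_0=\zeta^2V_0''$. This is structurally identical to the equations \eqref{v1eqn} and \eqref{v1eqn-slow} for $v^{\lam,1}$. Next, I would differentiate the smooth-pasting conditions. Differentiating \eqref{eq:boundary1} in $\sigma$ produces terms in $\vega,\vega'$ at $L_0$ together with terms in $V_0',V_0''$ at $L_0$ multiplied by $\d_\sigma L_0$; substituting \eqref{eq:boundary2}, i.e.\ $(1+L_0)V_0''(L_0)=-\gamma V_0'(L_0)$, makes every $\d_\sigma L_0$ contribution cancel --- the same cancellation that occurs in the derivation of \eqref{eq:boundary3} --- and leaves $\B\vega\mid_{L_0}=0$. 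The identical computation at $U_0$, using \eqref{eq:boundaryu2}, gives $\S\vega\mid_{U_0}=0$. Thus $\vega$ satisfies the very same type of mixed boundary value problem as $v^{\lam,1}$ in Section~\ref{sec:v1}.

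Finally, I would invoke solvability: let $w$ be the adjoint solution of Lemma~\ref{wlemma} for the parameter $\sigma$, so that $\lnt^{*}(\sigma;\Delta_0)w=0$ with boundary conditions \eqref{eq:bnd1}. Multiplying the displayed equation for $\vega$ by $w$, integrating over $(L_0,U_0)$ and integrating by parts twice exactly as in \eqref{eq:by-parts}, the boundary terms vanish --- because $w$ obeys \eqref{eq:bnd1} while $\B\vega\mid_{L_0}=\S\vega\mid_{U_0}=0$ --- and the bulk term is $\int_{L_0}^{U_0}\vega\,\lnt^{*}w\,d\zeta=0$. Hence $(1-\gamma)\dpr\int_{L_0}^{U_0}wV_0\,d\zeta=\sigma\int_{L_0}^{U_0}wD_2V_0\,d\zeta$, which rearranges to \eqref{dpreqn}; the ratio is manifestly independent of the normalizations of $V_0$ and $w$.

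The step I expect to be the main obstacle is the cancellation of the $\d_\sigma L_0$ and $\d_\sigma U_0$ terms when differentiating the boundary conditions: this is the same ``smooth-fit'' phenomenon that makes the $\l_1,u_1$ corrections drop out in Section~\ref{sec:bndry}, and it relies essentially on having both pasting conditions \eqref{eq:boundary1}--\eqref{eq:boundary2} (respectively \eqref{eq:boundaryu1}--\eqref{eq:boundaryu2}) at each free boundary. A secondary and more routine point is justifying the differentiability of $(V_0,\Delta_0,L_0,U_0)$ in $\sigma$.
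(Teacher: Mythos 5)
Your proposal is correct and follows essentially the same route as the paper: differentiate the ODE \eqref{v0eqn} in $\sigma$ to get the inhomogeneous equation \eqref{vegaeqn} for $\vega$, differentiate the smooth-pasting conditions and use \eqref{eq:boundary2} (resp.\ \eqref{eq:boundaryu2}) to cancel the $\d_\sigma L_0$ and $\d_\sigma U_0$ contributions, and then apply the Fredholm solvability condition with the adjoint solution $w$ of Lemma~\ref{wlemma}. The only difference is that you spell out the differentiability and cancellation issues more explicitly than the paper does, which is a welcome addition but not a change of method.
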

\begin{proof}
In the NT region and at the boundaries, we have
\begin{align}
\half\sigma^2\zeta^2{V_0''} + \mu\zeta{V_0}' - (1-\gamma)\Delta_0{V_0} &= 0\label{vzODE}\\
\(1+ L_{0} \){V_0'}(L_0)  - (1-\gamma){V_0}(L_0) &=0,\label{lb1}\\
\(1+ L_{0} \){V_0''}(L_0)  + \gamma {V_0'}(L_0) &=0, \label{lb2}\\
\(\frac{1}{1-\lam} + U_{0} \){V_0'}({U_0})  - (1-\gamma){V_0}({U_0}) &=0,\label{ub1}\\
\(\frac{1}{1-\lam}+ U_{0} \){V_0''}({U_0})  + \gamma {V_0'}({U_0})&=0,\label{ub2}
\end{align}
where $'=\frac{d}{d\zeta}$. %
Differentiating the ODE \eqref{vzODE} with respect to $\sigma$, we find that in the NT region, $\vega=\d_{\sigma} {V_0}$ satisfies
\begin{equation}
 \half\sigma^2\zeta^2\vega'' + \mu\zeta\vega' - (1-\gamma)\Delta_0\vega  = -\sigma D_2{V_0} + (1-\gamma)\dpr{V_0}. \label{vegaeqn}
 \end{equation}
We also have by differentiating \eqref{lb1} with respect to $\sigma$ and using the $C^2$ smooth pasting condition \eqref{lb2} for ${V_0}$ that $\vega$ satisfies the usual homogeneous Neumann boundary condition at $L_0$:
$$\(1+ L_{0} \)\vega'(L_0)  - (1-\gamma)\vega(L_0) =0. $$
Similarly, differentiating \eqref{ub1} with respect to $\sigma$ and using \eqref{ub2} gives:
$$\(\frac{1}{1-\lam} + U_{0} \)\vega'({U_0})  - (1-\gamma)\vega({U_0}) =0. $$

We note that since ${V_0}$ is well-defined, so is $\vega$ by differentiation: we are just using equations it must satisfy to try and shortcut its computation. 
Then a Fredholm solvability condition for \eqref{vegaeqn} determines $\dpr$. 
Multiplying equation \eqref{vegaeqn} by the adjoint function $w = \zeta^{k(z)-2}V_0$, integrating by parts and using the boundary conditions satisfied by the vega $\vega$ yields \eqref{dpreqn}.
\end{proof}

The expression for the Vega $\d_\sigma V_0$ is given in Appendix \ref{sec:explicit} in the case of real $\theta_\pm$. The formula in the complex case is very long and we omit it in this presentation. Writing the Vega in terms of spatial derivatives is related to the classical Vega-Gamma relationship for European option prices (see, for instance the discussion in \cite[Section 1.3.5]{fpss-book}), which can be used to show that portfolios that are long Gamma (convex) and long volatility (positive Vega). In the context of the classical Merton portfolio optimization problem with no transaction costs, an analogous relationship between the derivative of the value function with respect to the Sharpe ratio and the negative of the second derivative with respect to the wealth variable is found in \cite[Lemma 3.1]{fsz}. For infinite horizon problems, as here, it is not so direct because there is no time derivative that allows for a simple explicit solution of equation \eqref{vegaeqn} and its boundary conditions that would give $\vega$ in terms of $D_2V_0$, but nonetheless, a useful expression \eqref{vg} can be found.

\subsection{Computation of $v^{\lam,1}$}
We proceed, as in Section \ref{sec:var_param} to use the variation of parameters \eqref{eq:v1-form} to solve the inhomogeneous equation \eqref{v1eqn-slow} with boundary conditions \eqref{eq:boundary3} and \eqref{eq:boundary5}. We recall that the principal solution $V_0$ is given by formulas \eqref{v0formula} and \eqref{cpm}, where $v_\pm$ are the independent solutions of the the ODE (in $\zeta$) \eqref{v0eqn} with the volatility $\sigma=f(z)$. %
Then we have $v^{\lam,1} (z,\zeta)= A_{+}(z,\zeta)v_{+} (\zeta)+ A_{-}(z,\zeta)v_{-}(\zeta),$ where $A_{\pm}$ solve the same system of equations \eqref{eq:c-cond1} and \eqref{eq:c-cond2}, 
\begin{align}
&A'_{+} v_{+} + A'_{-} v_{-}=0,\label{eq:c-cond1-slow}\\
&A'_{+} v_{+}' + A'_{-} v_{-}'= F(z,\zeta), %
\label{eq:c-cond2-slow}
\end{align}
and $'$ denotes the derivative with respect to $\zeta$. 
The only change is that in this case, 
\begin{align}
F(z,\zeta) \define \frac{-V_1(z)D_1\d_\sigma V_0 + (1-\gamma)\delta_1V_0}{\frac12f(z)^2\zeta^2}.
\label{eq:F}
\end{align}
The solution of the system \eqref{eq:c-cond1-slow}--\eqref{eq:c-cond2-slow} is given by 
\begin{equation}
A_{\pm}(z,\zeta) =  \mp\int \frac{ v_{\mp} }{ v_{-}'v_{+} - v_{+}'v_{-} }  F(z,\zeta) d\zeta + C_{\pm}. \label{Apmsol-slow}
\end{equation}
This determines $v^{\lam,1} $. In the Proposition \ref{prop:slow} that follows, we show how these can be explicitly computed in the case of real $\theta_\pm$.

We also note, that the calculations in Section \ref{sec:bndry} are applicable also to the slow-scale case, and we conclude that the corrections to the boundaries $\l_1(z)$ and $\u_1(z)$ can be determined using 
\eqref{eq:l1} and \eqref{eq:u1}.

\subsection{Explicit Computations of  $(v^{\lam,1},\delta_1)$ in the real $\theta_\pm$ case}\label{sec:slow-explicit}
For the rest this section we will show the computations of $v^{\lam,1},$ and $\delta_1$, 
in the case when the roots $\theta_{\pm}(\delta_0)$ of quadratic \eqref{eq:theta} (with $\sigma=f(z)$) are real.
The complex case can also be calculated analytically, but we did not find the formulas to be enlightening, so we choose to omit the presentation of this calculation.
\begin{proposition}\label{prop:slow}
Recall that %
$\Delta\theta:=\theta_+ - \theta_-$, and that $v^{\lam,0}(\zeta)  = c_{+}v_{+}(\zeta) + c_{-}v_{-}(\zeta)$. 
Let  $\dot c_{\pm}\define \d_{\sigma} c_{\pm},~\dot\theta_{\pm}\define \d_{\sigma} \theta_{\pm}.$
If $\theta_{\pm}$ are real then,
\begin{align}
\delta_1(z) &= \frac{V_1(z)}{(1-\gamma)D}\left\{
c_{+}Q_+R_+ -c_{-} Q_-R_-+\(c_{-}Q_+ + c_{+}Q_-\) \Delta \theta\log\frac{\u_0}{\l_0}
\right.\label{eq:delta1-slow}\\
& + \left(c_{+}^2 \theta_{+}\dot\theta_{+} R_+ \log\frac{\u_0}{\l_0}
-c_{-}^2 \theta_{-}\dot\theta_{-} R_-\right) 
\( \log\frac{\u_0}{\l_0} - \frac{1}{\Delta\theta}   \)\\
&\left.+c_{-}c_{+} \(\theta_{+}\dot\theta_{+} + \theta_{-}\dot\theta_{-}\) \Delta \theta\log^2\frac{\u_0}{\l_0}\right\},
\end{align}
where
$Q_\pm  :=  \(   \theta_{\pm}\dot c_{\pm} + c_{\pm}\dot\theta_{\pm}\)$,
$R_\pm  :=  \left(\u_0^{\pm\Delta\theta} - \l_0^{\pm\Delta\theta}\right)$, and 
$D  :=  c_{+}^2R_+ - c_{-}^2R_- + 2c_{+}c_{-}\Delta\theta\log\frac{\u_0}{\l_0}$.
The constants $\dot c_{\pm}$ and $\dot \theta_{\pm}$ are calculated explicitly in terms of $\dpr$ in Appendix \ref{sec:explicit}.
Then $ v^{\lam,1}$ is determined up to a multiple of $v^{\lam,0}$ :
\begin{align}
v^{\lam,1} &= C_{+}\zeta^{\theta_{+}} - \(\tilde c_{+} - \frac{\tilde d_{+}}{\Delta\theta}\)\zeta^{\theta_{+}}\log\zeta +\(\tilde c_{-} + \frac{\tilde d_{-}}{\Delta\theta}\)\zeta^{\theta_{-}}\log\zeta \label{eq:v1-slow} \\
&- \frac{\tilde d_{+}}{2}\zeta^{\theta_{+}}\log^2\zeta+\frac{\tilde d_{-}}{2}\zeta^{\theta_{-}}\log^2\zeta+ \xi\left(c_{+}\zeta^{\theta_{+}} + c_{-}\zeta^{\theta_{-}} \right),
\end{align}
for any $\xi\in\R$, where 
\begin{align}
\tilde c_{\pm}(z)&\define\frac{2}{f^2(z)}\(\frac{Q_\pm V_1(z)  + (1-\gamma)\delta_1c_{\pm}     }{\Delta\theta }  
\),~%
\tilde d_{\pm}(z)\define\frac{2V_1(z)}{f^2(z)}\frac{\dot\theta_{\pm}c_{\pm}\theta_{\pm}}{(\Delta\theta)},\label{eq:tilde-cd-slow}\\
 C_{+} &\define \frac{b_1}
{k_\l\theta_{+}{\l_0}^{\theta_{+}-1} -{\l_0}^{\theta_{+}}},
\label{eq:C-plus-slow}
\end{align}
and $b_1$ is given below in \eqref{eq:b1-slow}.
\end{proposition}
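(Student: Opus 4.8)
The plan is to reduce everything to linear algebra plus elementary integrals by inserting the closed-form real-root solution of Section~\ref{sec:buildSol1} (with the frozen volatility $\sigma=f(z)$) into the two representations already established: the Fredholm ratio for $\delta_1$ coming from the slow-case version of Lemma~\ref{wlemma}, and the variation-of-parameters formula \eqref{Apmsol-slow} for $v^{\lam,1}$. Writing $v^{\lam,0}(\zeta,z)=c_+\zeta^{\theta_+}+c_-\zeta^{\theta_-}$ as in \eqref{v0formula}--\eqref{cpm} and differentiating in $\sigma$ gives
$$\d_\sigma V_0=\dot c_+\zeta^{\theta_+}+\dot c_-\zeta^{\theta_-}+\big(c_+\dot\theta_+\zeta^{\theta_+}+c_-\dot\theta_-\zeta^{\theta_-}\big)\log\zeta,$$
where $\dot c_\pm$ and $\dot\theta_\pm$ are produced by implicit differentiation of the quadratics \eqref{eq:theta}, \eqref{eq:pim}, \eqref{eq:pip} and of the normalization \eqref{cpm}, feeding in $\dpr=\d_\sigma\Delta_0$ from Lemma~\ref{dprlemma}; their closed forms are exactly what Appendix~\ref{sec:explicit} records. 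Applying $D_1=\zeta\d_\zeta$ and using $\M_1 v^{\lam,0}=V_1(z)\,D_1\d_\sigma V_0$ then exhibits the inhomogeneity as a linear combination of $\zeta^{\theta_\pm}$ with coefficients $Q_\pm=\theta_\pm\dot c_\pm+c_\pm\dot\theta_\pm$ and of $\zeta^{\theta_\pm}\log\zeta$ with coefficients $c_\pm\theta_\pm\dot\theta_\pm$.

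For $\delta_1$ I would use $\delta_1(z)=V_1(z)\int_{\l_0}^{\u_0}w\,D_1\d_\sigma V_0\,d\zeta\,\big/\,\big[(1-\gamma)\int_{\l_0}^{\u_0}w\,v^{\lam,0}\,d\zeta\big]$ with $w(\zeta,z)=\zeta^{k-2}v^{\lam,0}(\zeta,z)$, $k=\mu/(\half f^2(z))$, as in the slow adaptation of Lemma~\ref{wlemma}. The decisive simplification is the Vieta relation for \eqref{eq:theta}, $\theta_++\theta_-=1-k$, which gives $k-2+2\theta_\pm=-1\pm\Delta\theta$ and $k-2+\theta_++\theta_-=-1$, so both $w\,v^{\lam,0}$ and $w\,D_1\d_\sigma V_0$ collapse to linear combinations of $\zeta^{-1}$, $\zeta^{-1\pm\Delta\theta}$ and their products with $\log\zeta$. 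All of these integrate explicitly: $\int_{\l_0}^{\u_0}\zeta^{-1}d\zeta=\log(\u_0/\l_0)$, $\int_{\l_0}^{\u_0}\zeta^{-1\pm\Delta\theta}d\zeta=\pm R_\pm/\Delta\theta$ with $R_\pm=\u_0^{\pm\Delta\theta}-\l_0^{\pm\Delta\theta}$, and the $\log\zeta$-weighted versions by a single integration by parts. The denominator reduces to $D/\Delta\theta$, and regrouping the numerator into its $R_\pm$, $\log(\u_0/\l_0)$ and $\log^2(\u_0/\l_0)$ pieces yields \eqref{eq:delta1-slow}.

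For $v^{\lam,1}$ I would solve the inhomogeneous ODE \eqref{v1eqn-slow}, whose right-hand side $-V_1(z)D_1\d_\sigma V_0+(1-\gamma)\delta_1 v^{\lam,0}$ is, by the above, a combination of $\zeta^{\theta_\pm}$ (resonant with the homogeneous solutions $v_\pm=\zeta^{\theta_\pm}$) and $\zeta^{\theta_\pm}\log\zeta$ (doubly resonant). Equivalently to \eqref{Apmsol-slow}, one posits $v^{\lam,1}=\sum_\pm\big(a_\pm\zeta^{\theta_\pm}+b_\pm\zeta^{\theta_\pm}\log\zeta+d_\pm\zeta^{\theta_\pm}\log^2\zeta\big)$ and uses the elementary identities $\lnt(f(z);\delta_0)(\zeta^{\theta_\pm}\log\zeta)=\pm\half f^2(z)\Delta\theta\,\zeta^{\theta_\pm}$ and $\lnt(f(z);\delta_0)(\zeta^{\theta_\pm}\log^2\zeta)=\pm f^2(z)\Delta\theta\,\zeta^{\theta_\pm}\log\zeta+f^2(z)\zeta^{\theta_\pm}$, each a consequence of $\theta_\pm$ solving \eqref{eq:theta}. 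Matching $\zeta^{\theta_\pm}\log^2\zeta$ coefficients determines $\tilde d_\pm$; matching $\zeta^{\theta_\pm}\log\zeta$ coefficients---which pick up a feedback contribution from the $\log^2$ terms, so that the combination $\tilde c_\pm\mp\tilde d_\pm/\Delta\theta$ appears---determines $\tilde c_\pm$; and the two homogeneous coefficients are pinned down, modulo the admissible freedom $+\xi v^{\lam,0}$, by imposing the buy-boundary condition $\B v^{\lam,1}\mid_{\l_0}=0$ from \eqref{eq:boundary3}. Since $\B v^{\lam,0}\mid_{\l_0}=0$ the $\xi$-term drops out of this condition, and solving for the $\zeta^{\theta_+}$ coefficient gives $C_+$ as in \eqref{eq:C-plus-slow}, with $b_1$ the $\B$-image at $\l_0$ of the particular (logarithmic) part and denominator $(1-\gamma)^{-1}\B\zeta^{\theta_+}\mid_{\l_0}$. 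The remaining sell-boundary condition $\S v^{\lam,1}\mid_{u_0}=0$ from \eqref{eq:boundary5} is then automatic: it coincides with the Fredholm solvability condition already used to fix $\delta_1$.

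The main obstacle I expect is bookkeeping rather than anything conceptual. The delicate part is producing correct closed forms for $\dpr$, $\dot\theta_\pm$ and $\dot c_\pm$: the eigenvalue $\Delta_0$ enters $\theta_\pm$, $\pi_\pm$, $L_0$ and $U_0$ simultaneously, so the chain rule through the coupled transcendental/quadratic system of Section~\ref{sec:buildSol1} must be tracked consistently---and Lemma~\ref{dprlemma} is precisely what makes this feasible, since it delivers $\dpr$ without differentiating the transcendental equation itself. The second source of friction is the doubly-resonant variation-of-parameters step, where single and double logarithmic powers both occur, the $\log^2$ term feeds back into the $\log$-coefficient equation, and the $\log\zeta$-weighted integrals entering $\delta_1$ must be evaluated and reorganized cleanly into the $R_\pm$, $\log(\u_0/\l_0)$ and $\log^2(\u_0/\l_0)$ building blocks of the stated formulas.
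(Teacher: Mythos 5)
Your proposal is correct and follows essentially the same route as the paper's proof in Appendix \ref{propproof2}: the same decomposition of $\d_\sigma V_0$ into $\zeta^{\theta_\pm}$ and $\zeta^{\theta_\pm}\log\zeta$ terms with coefficients $\dot c_\pm,\dot\theta_\pm$ from Appendix \ref{sec:explicit}, the same Fredholm ratio for $\delta_1$ evaluated via the collapse of exponents $k-2+\theta_++\theta_-=-1$, and the same resolution of the resonant ODE followed by the singular linear system for $C_\pm$ with $C_-=0$. Your undetermined-coefficients ansatz with $\log^2\zeta$ terms is just an equivalent repackaging of the variation-of-parameters computation \eqref{Apmsol-slow} that the paper carries out.
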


The proof is given in Appendix \ref{propproof2}.

\section{Analysis of the results}\label{sec:results}
The following are graphs of the buy and the sell boundaries $\l_0$ and $\u_0$ and the long-term growth rate $ \delta_0$ 
with constant volatility and the first order approximation to the buy and the sell boundaries with stochastic volatility $\l_0 + \sqrt{\eps}\l_1, \u_0 + \sqrt{\eps}\u_1$ and to the long-term growth rate $\delta_0 + \sqrt{\eps}\delta_1$. We have four separate cases: slow-scale and fast-scale stochastic volatility, and in each cases, two different sets of graphs that illustrate two additional cases: when the roots $\theta_{\pm}$ of equation \eqref{eq:theta} are real, and when they are complex.

The values used to obtain these graphs are $V_3=-1, \mu=7\%, \sigb=f(z)=f'(z)=0.2, \gamma=2$ for the case $\theta_{\pm}$ are real, and $\mu=5\%, \sigb=f(z)=f'(z)=0.2, \gamma=2$ for the imaginary roots $\theta_{\pm}$. Additionally, we have used $\eps = 10^{-3}$ in case of fast-scale real roots, $\eps = 10^{-4}$ in case of fast-scale complex, $\eps = 10^{-6}$ in case of slow-scale real roots, and $\eps = 10^{-3}$ in case of the complex roots. The reason for such vastly different $\eps$ in all the cases, is the desire to have the $O(\sqrt{\eps})$ approximation to be close to the original boundaries. 

We observe that the effect of stochastic volatility is to shift both the buy and the sell boundaries down. Interestingly, in case of fast-scale stochastic volatility, the shifts to the boundaries $\sqrt{\eps}\l_1$ and  $\sqrt{\eps}\u_1$ do not depend on $Z_t$ the current level of stochastic volatility factor. The intuitive explanation of this observation is the following: we emphasize that this is only approximation to the boundary. As such, even if the current position is $O(\eps)$ away from the boundary, then in the time it takes the wealth ratio $\zeta_t$ to reach the boundary, the volatility factor will have changed by $O(1)$. Hence, even at such close proximity to the boundary, the current volatility factor level is not important. What important is the average of the variance $\sigb^2$. Hence, in case of fast-scale stochastic volatility, only the average level $\sigb$ plays a role. The situation is, of course, not analogous in the slow-scale case, where the current level is extremely important. As intuitively, we can use the same level of volatility factor, for a significant amount of time, with insignificant measurement error. 

Intuitively, the effect of stochastic volatility should reduce (or at least not increase) the long-term growth rate $\delta^\eps$, at least for small correlation $\rho$, the intuition coming from Jensen's inequality. Indeed,  
\begin{align}
\E^{x,y,z}_0\left[ \U\left(X_T+Y_T-\lam Y_T^{+}\right)\right] &= \E^{x,y,z}_0\left[\E\left[ \U\left(X_T+Y_T-\lam Y_T^{+}\right)\big| \mathcal F^{B^2} \right]\right]\\
& \le \E^{x,y,z}_0\left[ \U\left(\E\left[X_T+Y_T-\lam Y_T^{+}\big| \mathcal F^{B^2} \right]\right)\right].
\end{align}
The right hand side, for $\rho=0$, is approximately the total wealth using the average volatility $\sigb$, in case of fast-scale volatility, and the  the total wealth using the initial volatility level $z$ in the slow-scale volatility. Hence, we expect $\delta^\eps\le\delta_0$ in those cases. 

It turns out however that the first order effect in case $\rho=0$ is zero. This is clear from the calculations, specifically \eqref{eq:V3}, \eqref{eq:delta1}, \eqref{Fdef}, \eqref{Apmsol}, that $\delta_1=0$, and $v^{\lam,1} =0$ and thus so are $\l_1$ and $u_1$ from \eqref{eq:l1} and \eqref{eq:u1}, since all the terms are proportional to $\rho$. %
This effect has been observed by \cite{fsz}.

Additionally, if $\rho<0$, as is typically observed in the equity markets, 
the effect should be tighter NT region, and lower boundaries, than in case of constant volatility. The intuition here being that if the stock price goes up, the current volatility should be lower, while keeping the average volatility unchanged. This will cause the investor to start selling earlier. Similarly, when stock price goes down, the current volatility will tend to decrease, this will cause the investor to start buying later. Additionally, since the volatility will be quick to return to it's average value, these changes would not be symmetric, and the width of the NT region will decrease. 
These changes can be observed in the following graphs too, especially in graphs of the boundaries as a function of $\mu$ in the the upper left of the graphs in Figures \ref{fig:fast-real}, \ref{fig:fast-complex}, \ref{fig:slow-real}, \ref{fig:slow-complex}.

We observe, that in all the cases the approximation does not differ much, from the original boundaries, as long as we're sufficiently away from the case when Merton's proportion $\pi_M=1.$ In this case, it is known that the NT region degenerates, as it is optimal to trade only once, and invest all the wealth into the risky stock. In this cases, because the boundaries $\l_0, \u_0$ are positive, and increase to $+\infty$ as the Merton's optimal proportion approaches one. Additionally, both $\l_1$, and $\u_1$ are negative, and they converge to $-\infty$ as the Merton's optimal proportion approaches one. Finally, it appears that the speed of the convergence of $\l_1$, and $\u_1$ is faster than that of $\l_0$ and $\u_0$. {Hence, overall the approximation and the original boundaries cross each other before diverging. Even though, for any fixed $\pi_M<1$ the approximation would be very close to the original boundary, if $\eps>0$ is taken to be small enough.} This is another reason why we have used different $\eps$ between the cases in the fast-scale and the slow-scale stochastic volatility. As the behaviors of the graphs is very different in the slow- and fast-scale cases and in the real and imaginary $\theta_{\pm}$, as the Merton's proportion approaches one.%

The change to the equivalent safe rate from $\delta_0$ to $\delta_0 + \sqrt{\eps}\delta_1$ exhibit more stability. In fact, it is almost a parallel shift down, in case of fast-scale stochastic volatility. However, in certain cases, specifically in fast-scale stochastic volatility, with complex roots $\theta_{\pm}$ we see that the approximation can be greater, then the original equivalent safe rate $\delta_0$. This seems to be the case, because the shift is almost, but not entirely parallel, and with approximation term $\delta_1$ changes signs. For small drift $\mu$ and for large $\sigb$ it is positive, whereas it's negative in other cases. In case of slow-scale stochastic volatility factor, this shift is more pronounced, especially for low initial volatility factor $z$ and for high $\mu$.

\begin{figure}[htb]
 \begin{center}
 \includegraphics[width=0.5\linewidth]{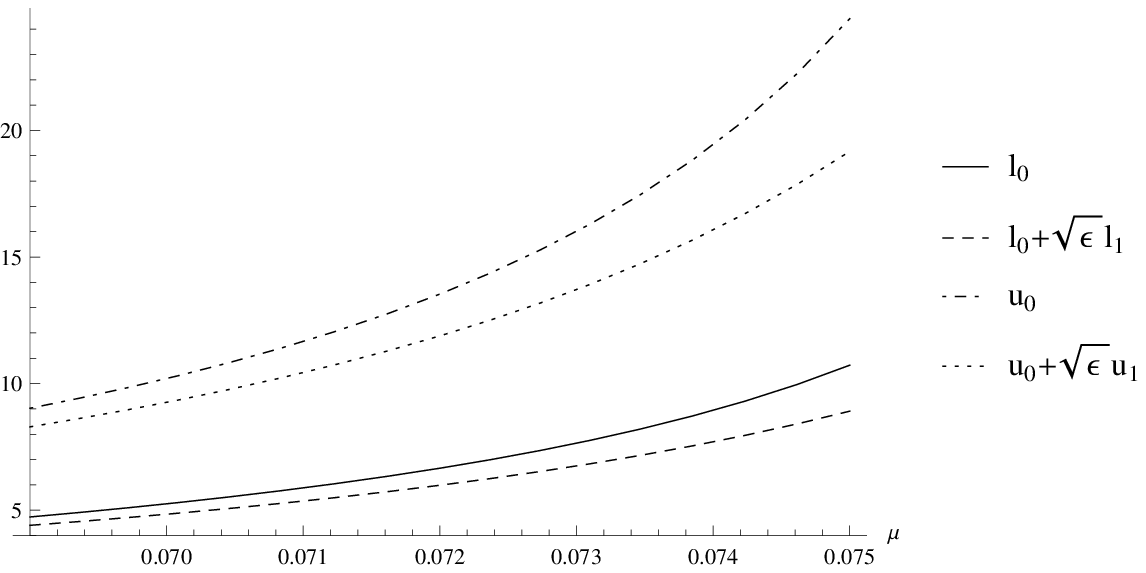}\includegraphics[width=0.5\linewidth]{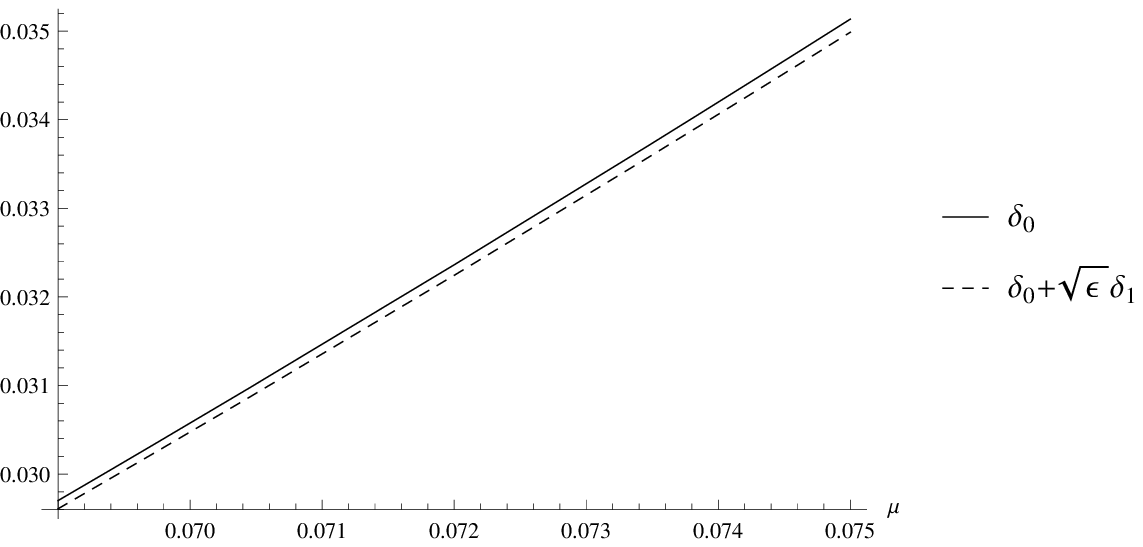}\\
 \includegraphics[width=0.5\linewidth]{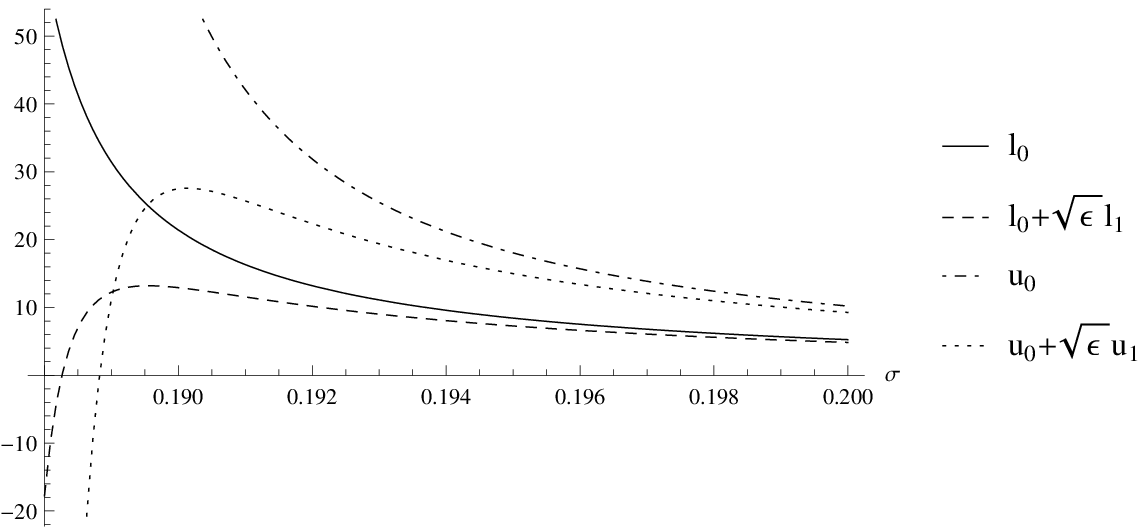}\includegraphics[width=0.5\linewidth]{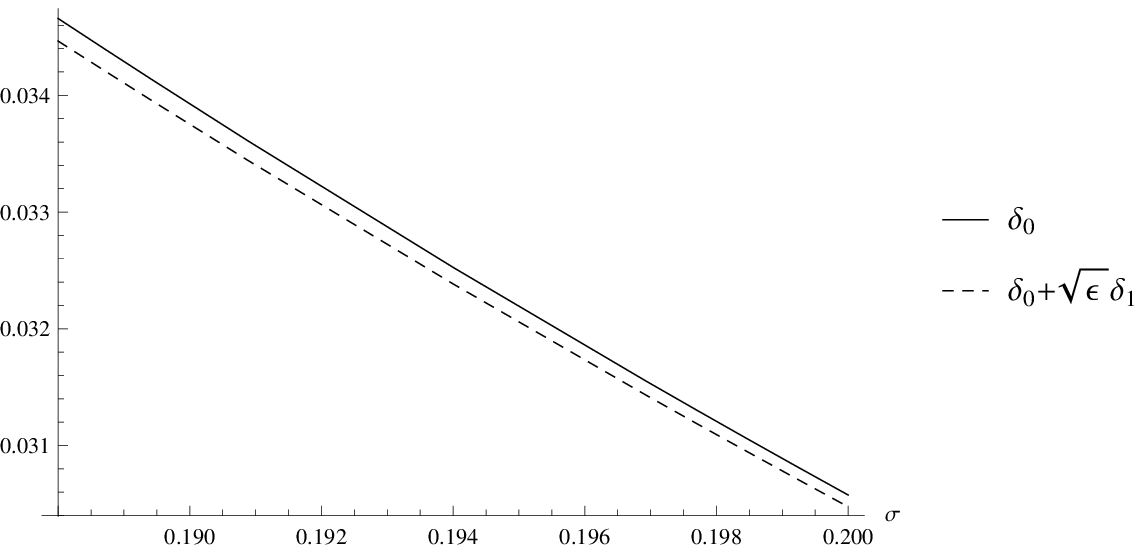}\\
\includegraphics[width=0.5\linewidth]{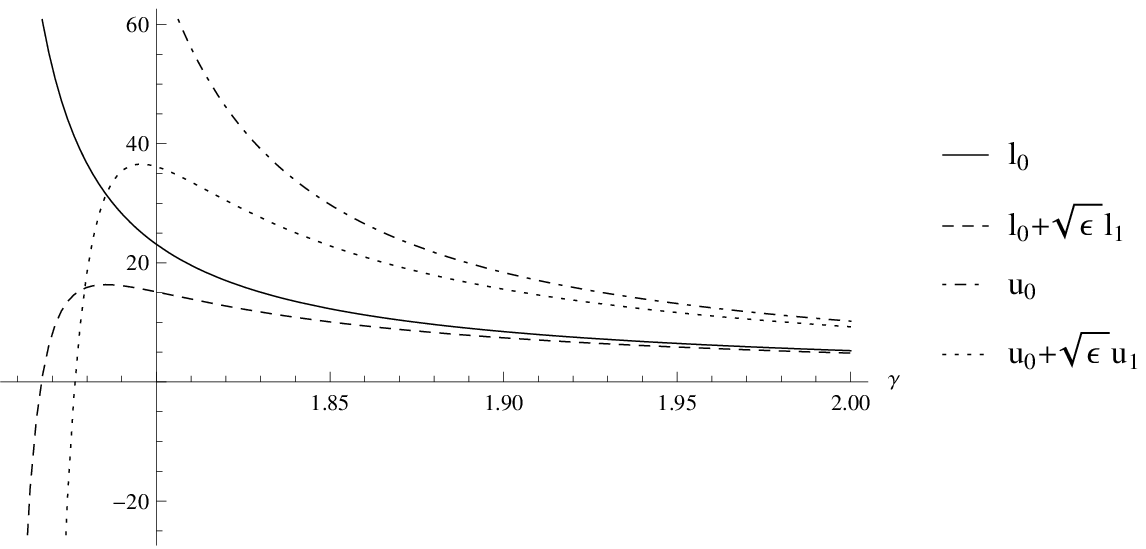}\includegraphics[width=0.5\linewidth]{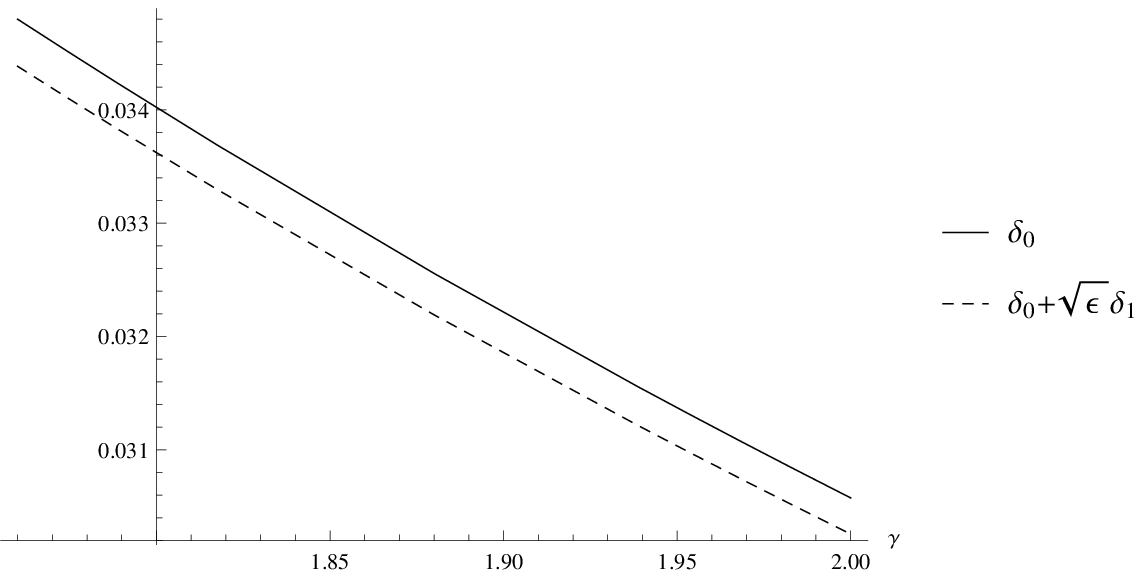}
 \caption{Three graphs of boundaries $\l_0, \u_0$ and $\l_0 + \sqrt{\eps}\l_1, \u_0 + \sqrt{\eps}\u_1$ (left column) and of long-term growth rate $ \delta_0$ and $\delta_0 + \sqrt{\eps}\delta_1$ (right column) in the fast-scale stochastic volatility and in case $\theta_{\pm}$ are real as a function of: Top row$\mu$, middle row
right: $\sigb$, bottom row:  $\gamma$. }
 \label{fig:fast-real}
 \end{center}
 \end{figure}

\begin{figure}[htb]
 \begin{center}
 \includegraphics[width=0.45\linewidth]{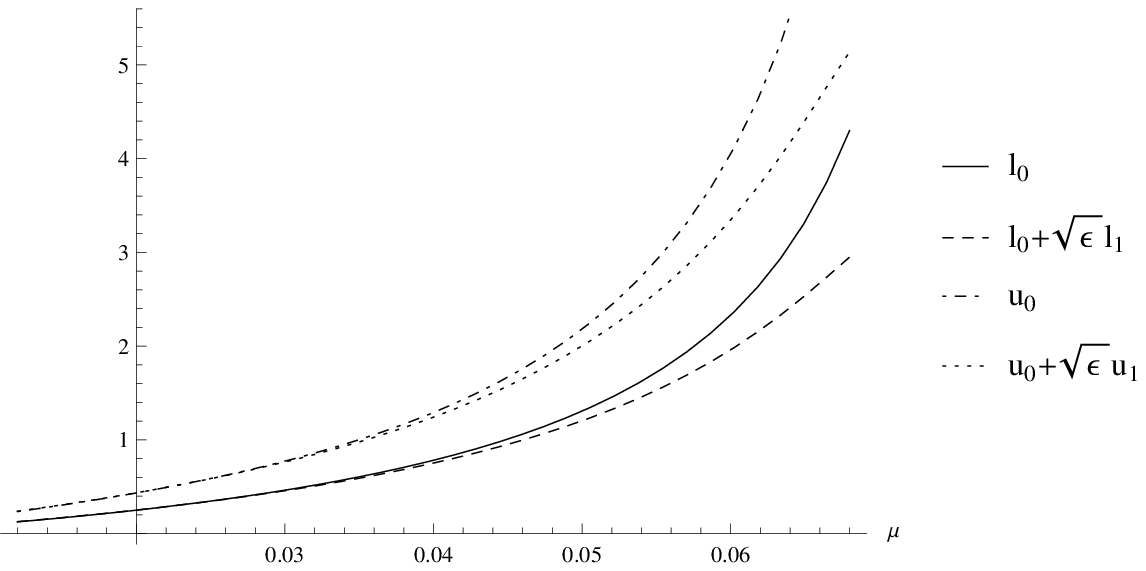}
 \includegraphics[width=0.45\linewidth]{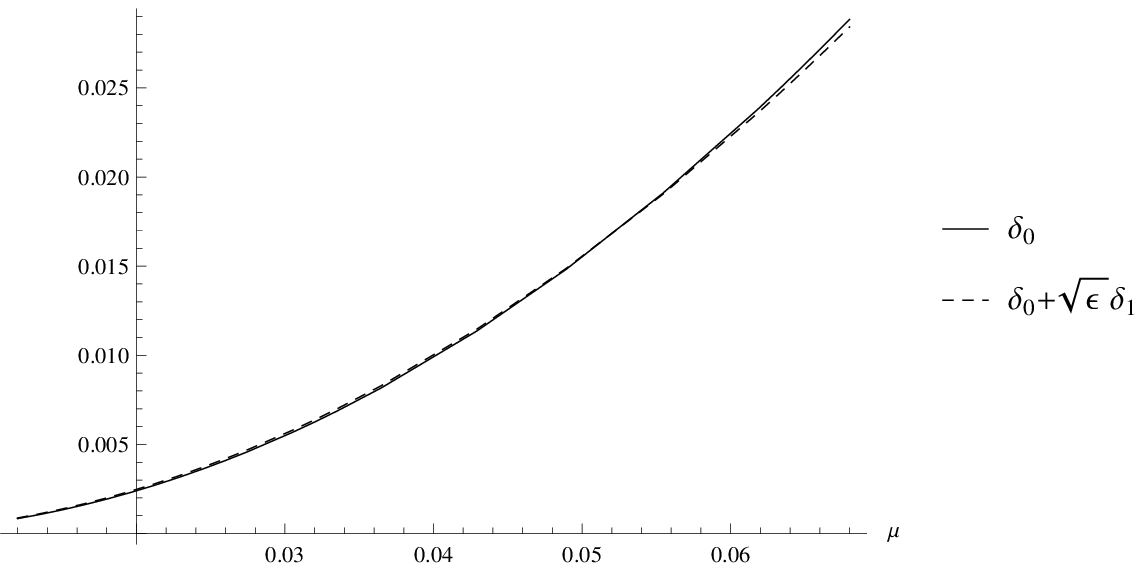}\\
\includegraphics[width=0.45\linewidth]{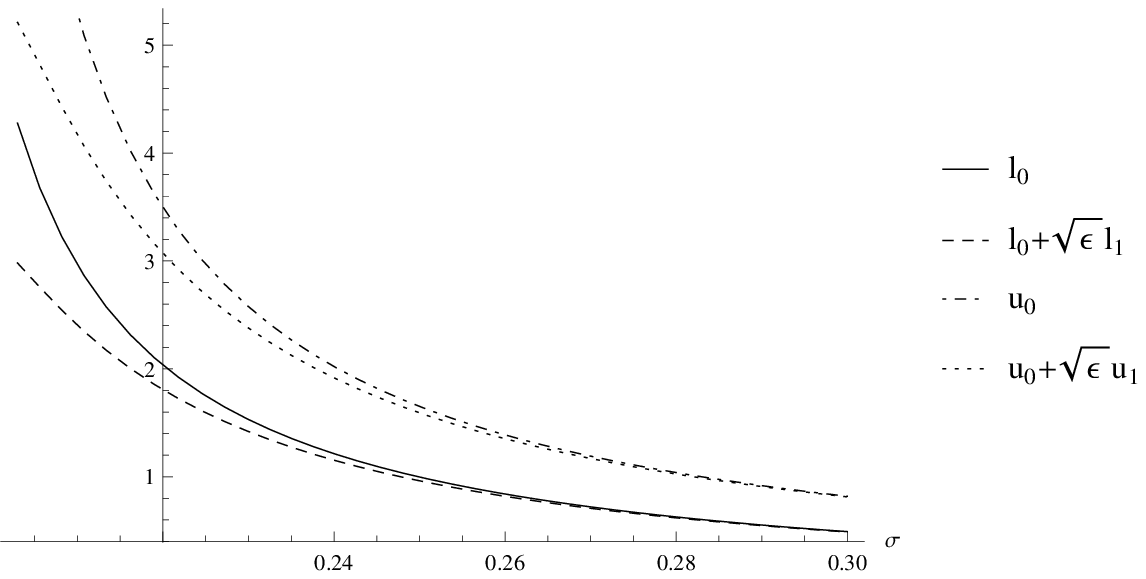}
\includegraphics[width=0.45\linewidth]{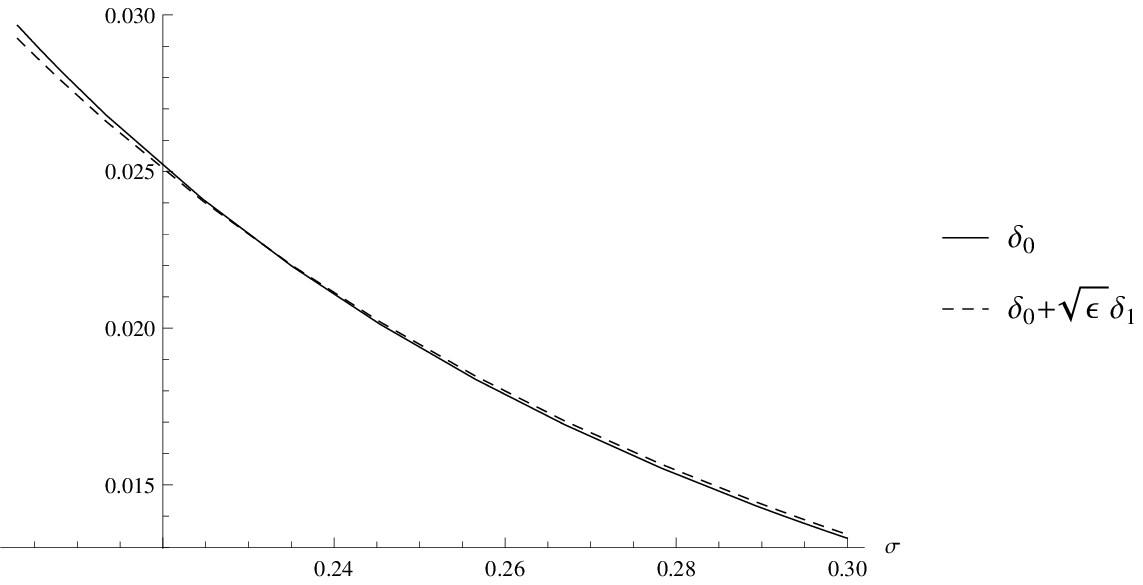}\\
\includegraphics[width=0.45\linewidth]{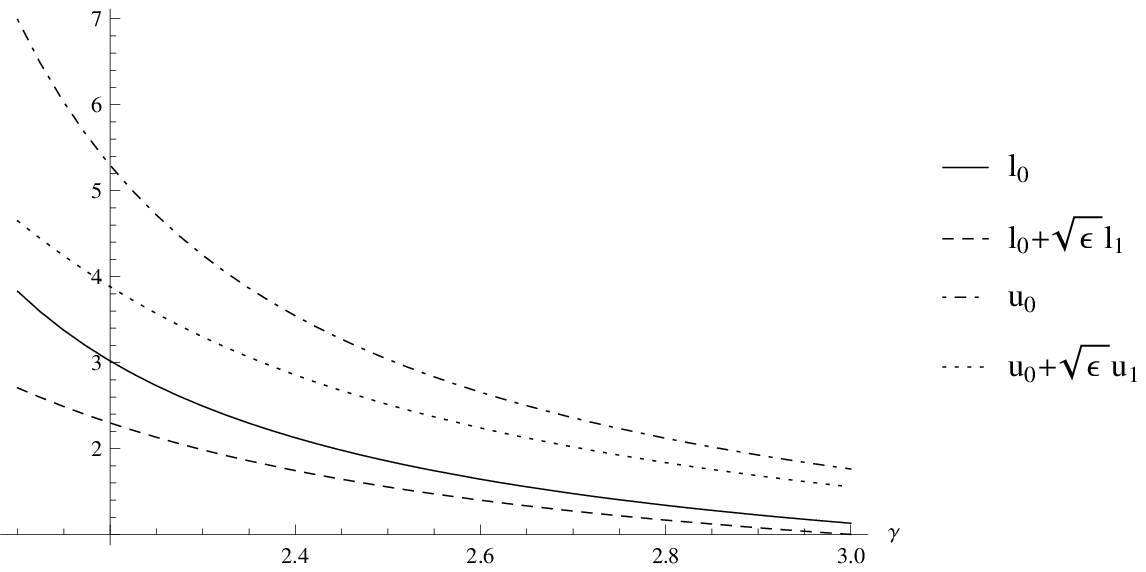}
\includegraphics[width=0.45\linewidth]{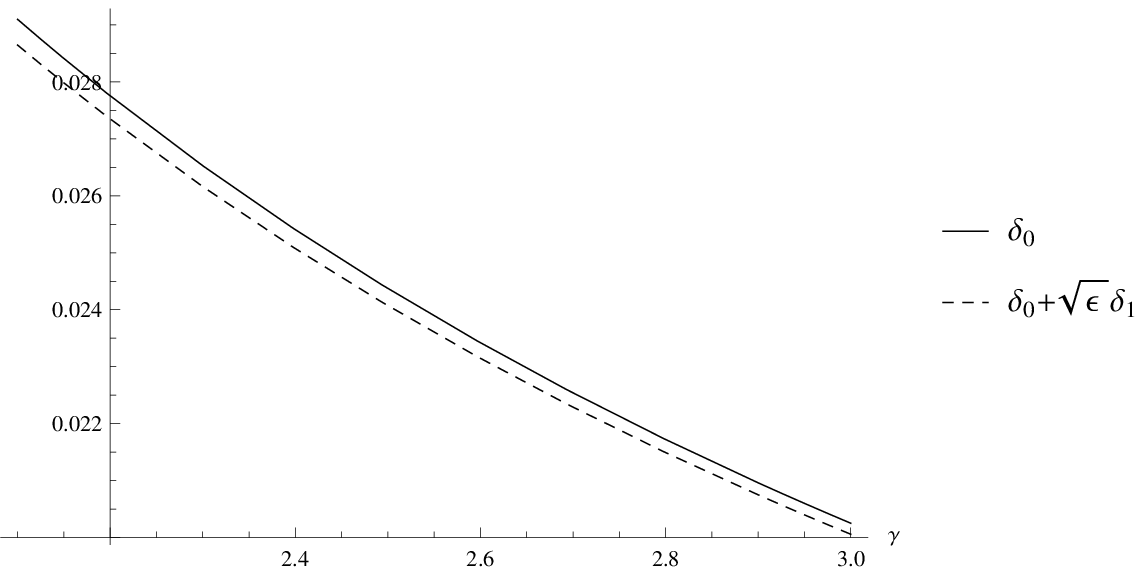}
 \caption{Three graphs of boundaries $\l_0, \u_0$ and $\l_0 + \sqrt{\eps}\l_1, \u_0 + \sqrt{\eps}\u_1$ (left column) and of long-term growth rate $ \delta_0$ and $\delta_0 + \sqrt{\eps}\delta_1$ (right column) in the fast-scale stochastic volatility and in case $\theta_{\pm}$ are complex as a function of: Top row$\mu$, middle row
right: $\sigb$, bottom row:  $\gamma$. }
 \label{fig:fast-complex}
 \end{center}
 \end{figure}

\begin{figure}[htb]
 \begin{center}
 \includegraphics[width=0.45\linewidth]{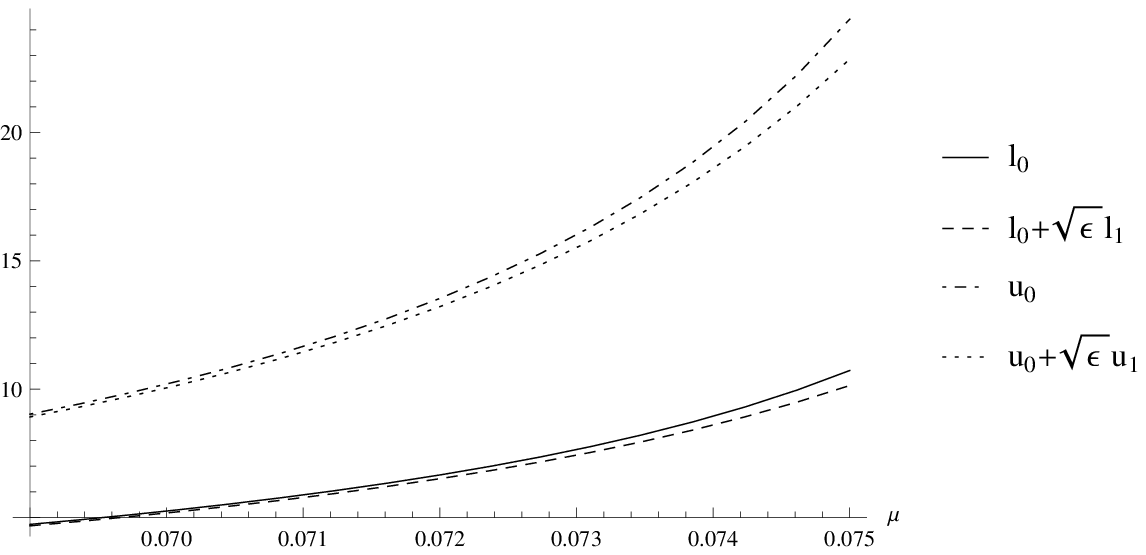}
 \includegraphics[width=0.45\linewidth]{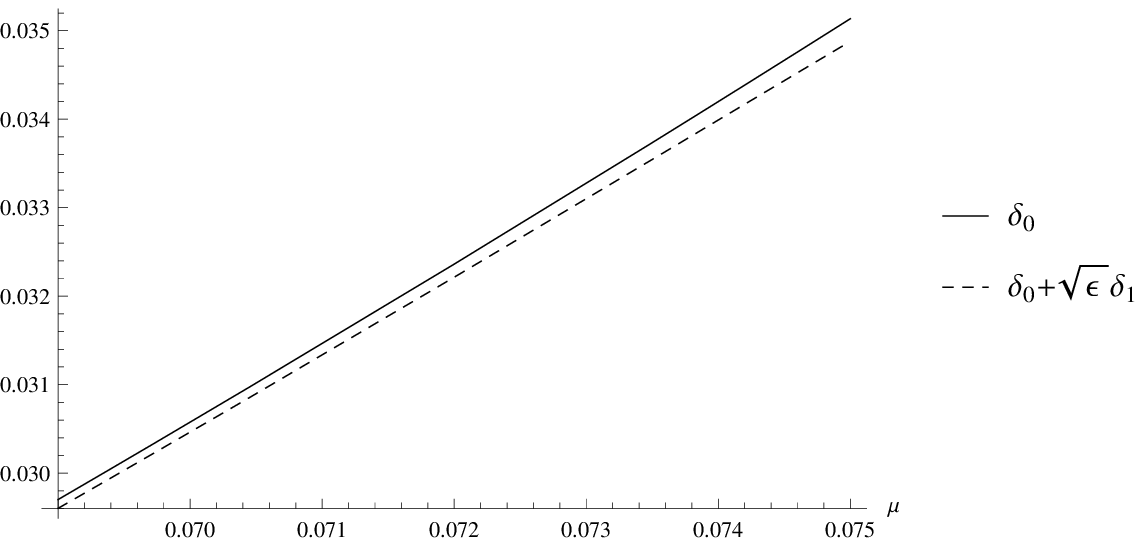}\\
 \includegraphics[width=0.45\linewidth]{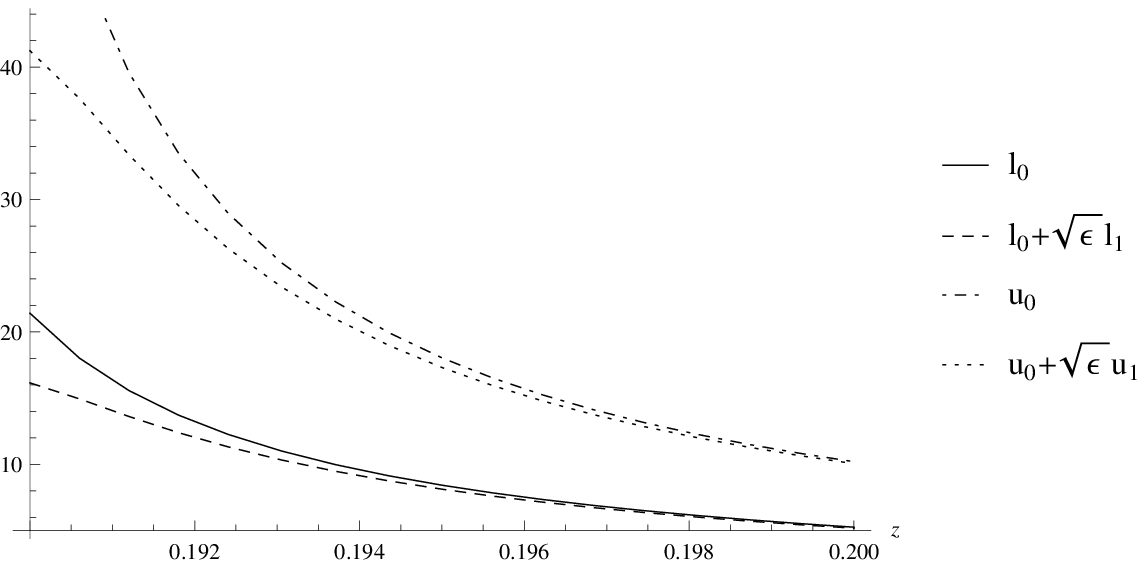}
 \includegraphics[width=0.45\linewidth]{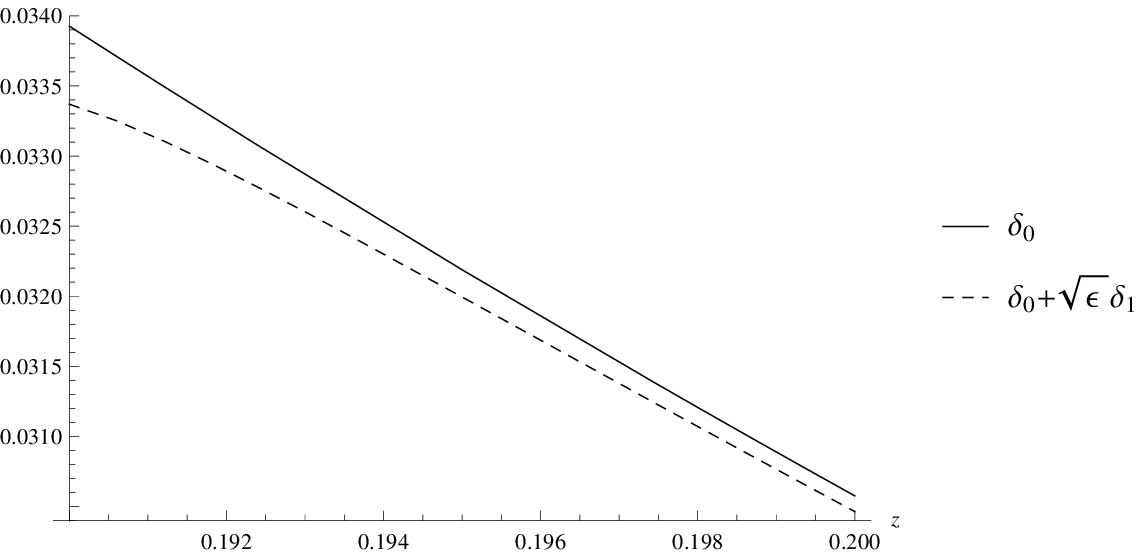}\\
\includegraphics[width=0.45\linewidth]{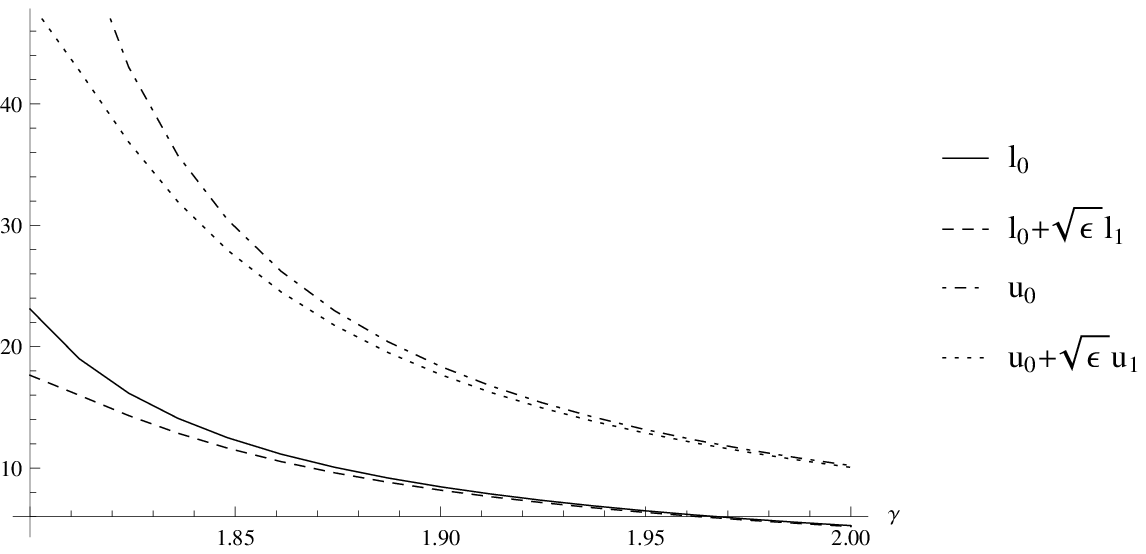}
\includegraphics[width=0.45\linewidth]{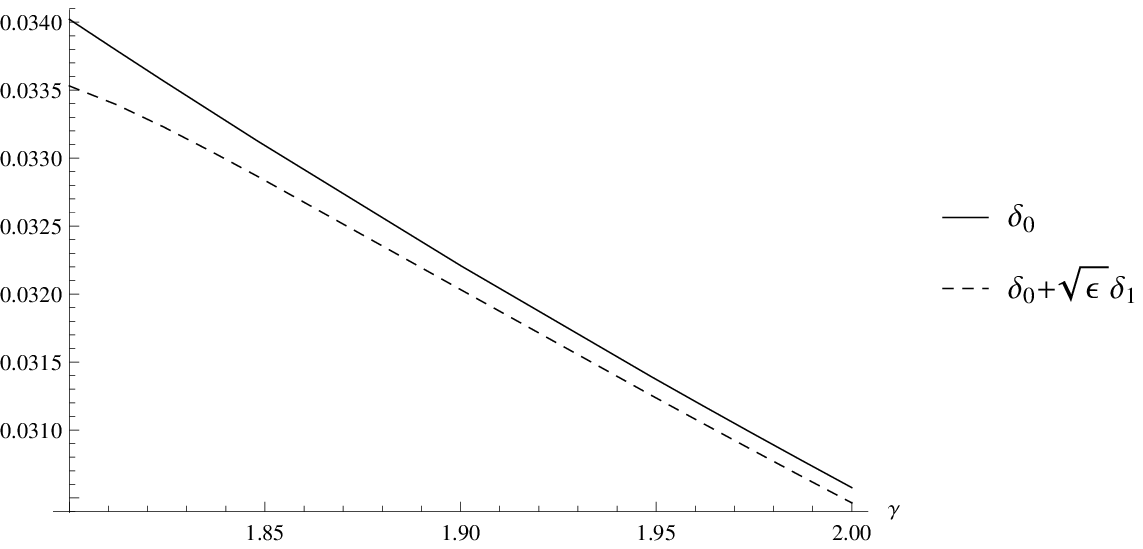}
 \caption{Three graphs of boundaries $\l_0, \u_0$ and $\l_0 + \sqrt{\eps}\l_1, \u_0 + \sqrt{\eps}\u_1$ (left column) and of long-term growth rate $ \delta_0$ and $\delta_0 + \sqrt{\eps}\delta_1$ (right column) in the slow-scale stochastic volatility and in case $\theta_{\pm}$ are real as a function of: Top row$\mu$, middle row
right: $\sigb$, bottom row:  $\gamma$. }
 \label{fig:slow-real}
 \end{center}
 \end{figure}

\begin{figure}[htb]
 \begin{center}
 \includegraphics[width=0.45\linewidth]{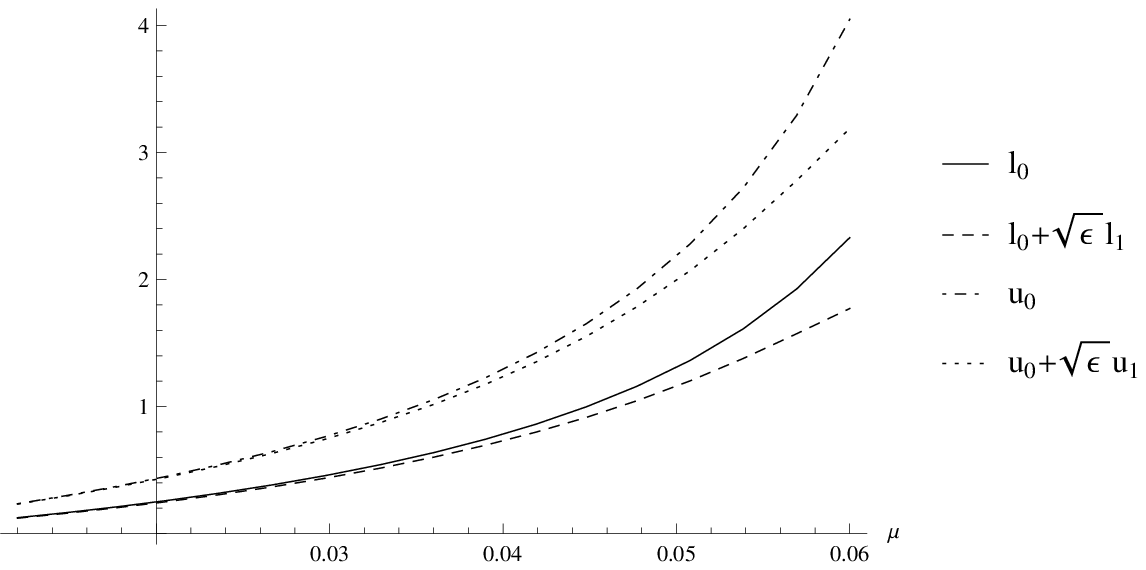}
 \includegraphics[width=0.45\linewidth]{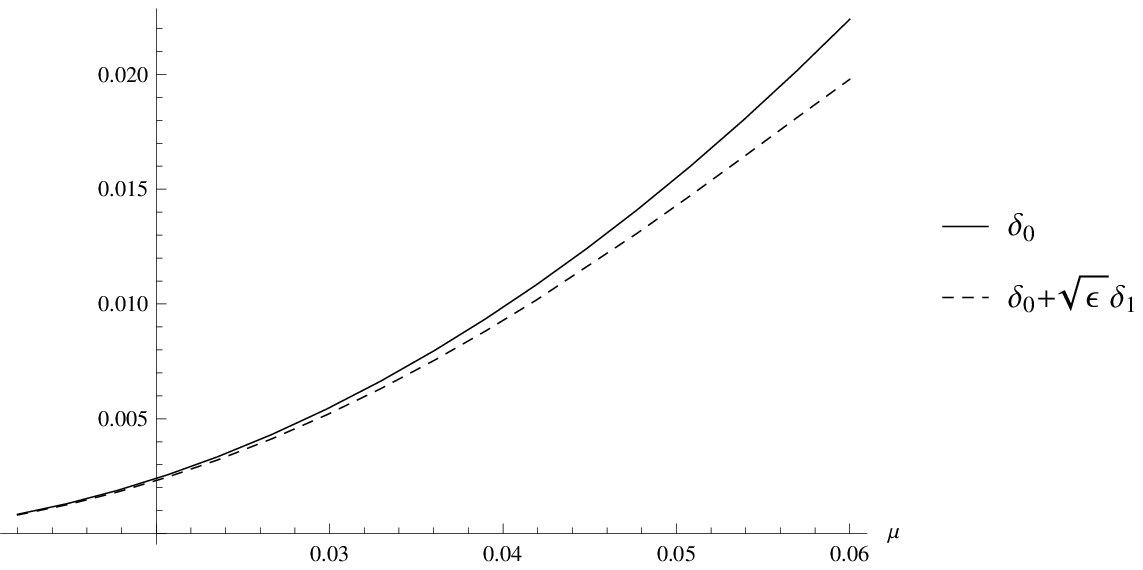}\\
\includegraphics[width=0.45\linewidth]{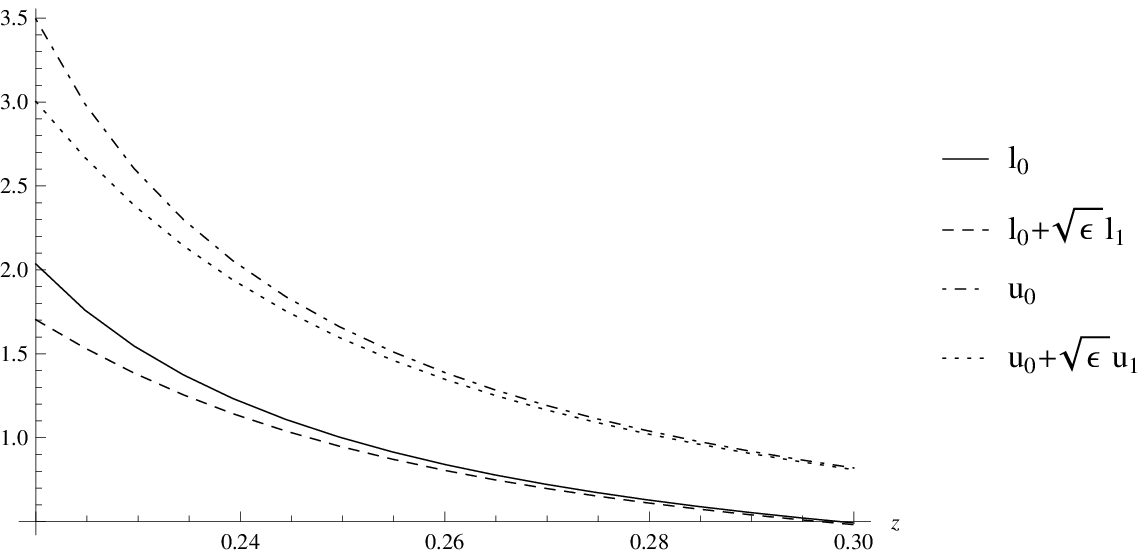}
\includegraphics[width=0.45\linewidth]{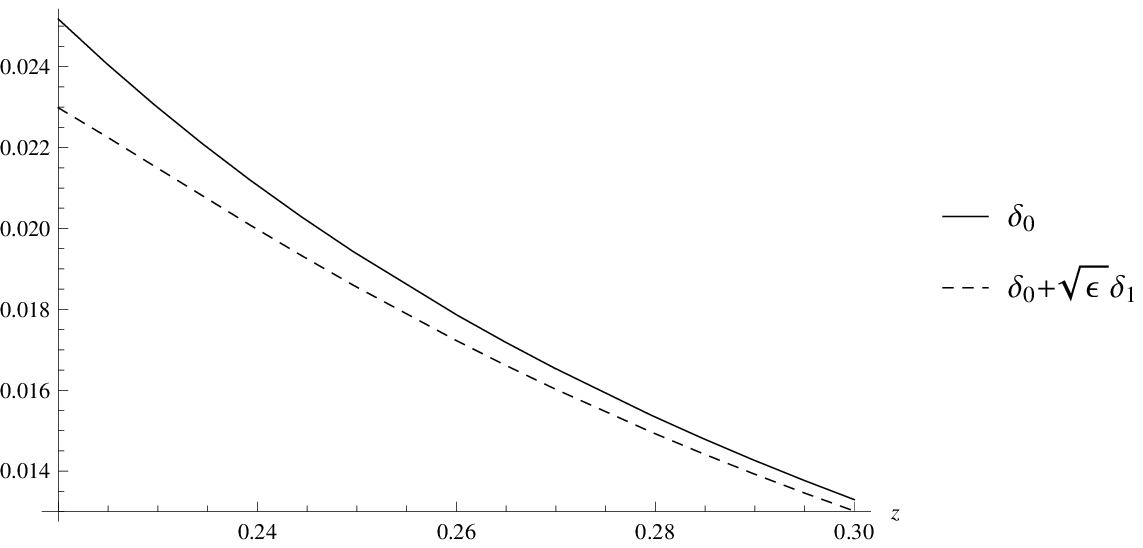}\\
\includegraphics[width=0.45\linewidth]{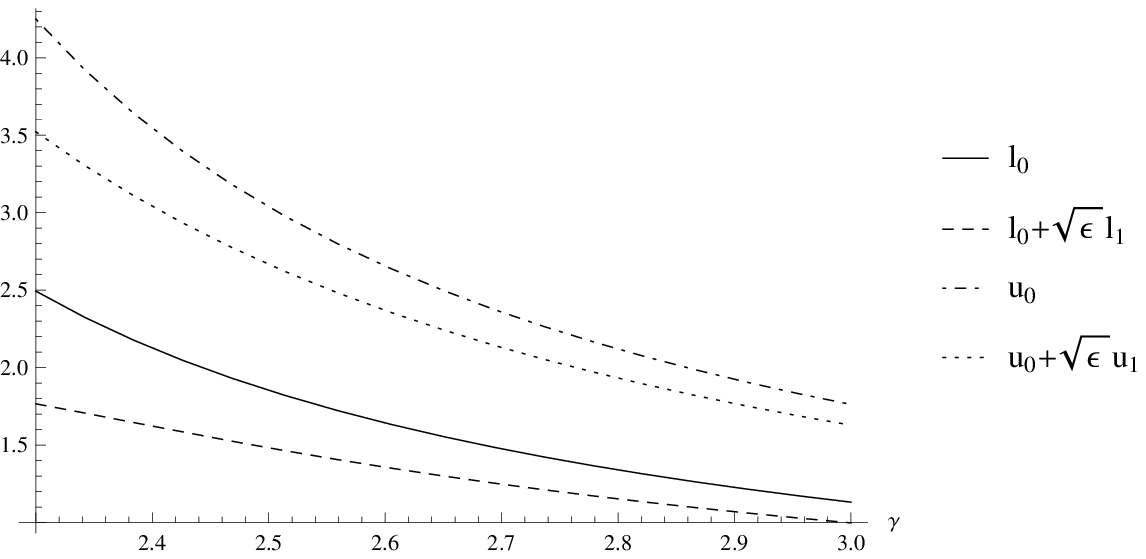}
\includegraphics[width=0.45\linewidth]{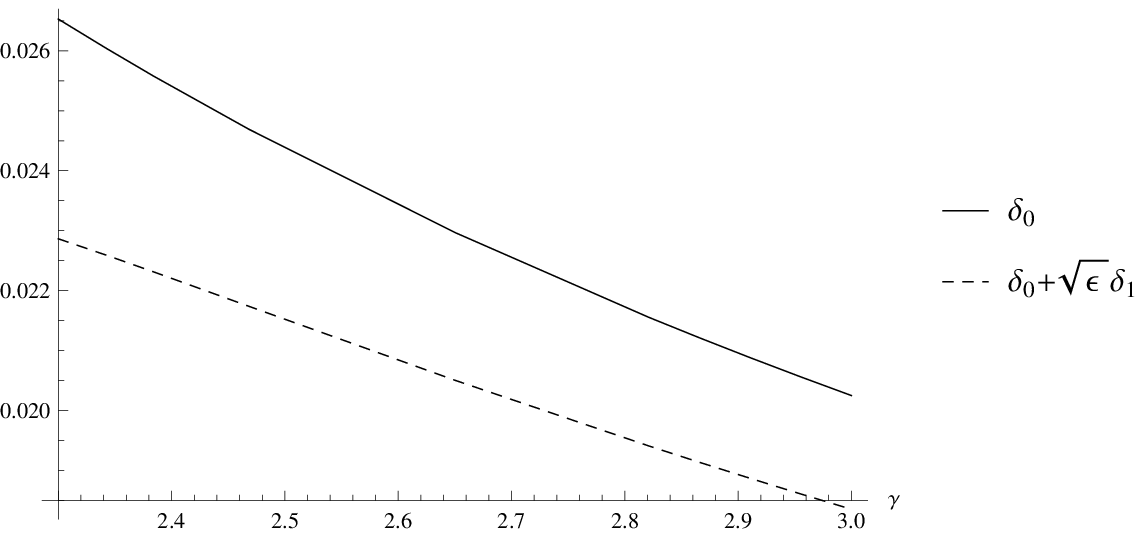}
 \caption{Three graphs of boundaries $\l_0, \u_0$ and $\l_0 + \sqrt{\eps}\l_1, \u_0 + \sqrt{\eps}\u_1$ (left column) and of long-term growth rate $ \delta_0$ and $\delta_0 + \sqrt{\eps}\delta_1$ (right column) in the slow-scale stochastic volatility and in case $\theta_{\pm}$ are complex as a function of: Top row$\mu$, middle row
right: $\sigb$, bottom row:  $\gamma$. }
 \label{fig:slow-complex}
 \end{center}
 \end{figure}

\section{Conclusion}\label{sec:conclusion}
We have analyzed the Merton problem of optimal investment in the presence of transaction costs and stochastic volatility. This is tractable, when the problem is to maximize the long-term growth rate. This leads us to a perturbation analysis of an eigenvalue problem, and shows that the asymptotic method can be used to capture the principle effect of trading fees and volatility uncertainty. In particular we identify that the appropriate averaging, when volatility is fast mean reverting, is given by root-mean-square ergodic average $\sigb$. These techniques can also be adapted to the finite time horizon Merton problem, indifference pricing of options and other utility functions, on a case by case basis.

\appendix

\section{Comparison Theorem}\label{sec:comparisonThm}
In this appendix, we will sketch a proof of the comparison theorem specifically adapted for our case, that shows that $\abs{\V}\le C\abs{V}$. We note, that a standard comparisons theorem for viscosity solutions can be easily adapted for the case $0<\gamma<1$ such as the one in \cite{BichuchShreve}. However, as far as we are aware, this proof is limited to the case $0<\gamma<1$ as it requires the finite values on the boundary of the solvency region $\Sv$ from \eqref{eq:solvency}. 
To circumvent these problems and provide a proof for all cases, we adapt the proof from \cite{JanecekShreve1}. To streamline the proof, we will assume that all local martingales in the following argument are true martingales.

We remind the reader that we have made the assumption that $\hat V,V$ are both smooth functions, namely, 
$ \V,V \in C^{1, 2, 2, 2}\([0,T)\times\Sv\times\R\).$
Additionally, we will also assume that there exists optimal strategies $\(\zhatl, \zhatr\)$ and $\(\zetal, \zetar\)$ for $ \V$, and $V$ respectively. So that for $\V$ we have that
\begin{align*}
\(\(1-\lam\) \d_x - \d_y\)\V(t,x,y,z) &=0 ~ \(t, \frac{y}{x}, z\) \in [0,T) \times [\zhatr(z),\infty)\times \R ,\\ 
(\d_t+\D^\eps)\V &= 0,~  \(t, \frac{y}{x}, z\) \in[0,T) \times [\zhatl(z),\zhatr(z)]\times \R,\\
(\d_y - \d_x)\V&= 0, \(t, \frac{y}{x}, z\) \in[0,T) \times (-\infty,\zhatl(z)]\times \R.
\end{align*}
In other words the no-trade region $\widehat{\mbox{NT}}$ for $\V$ is when the ratio of wealths $\frac{Y_t}{X_t}$ is within $[\zhatl(Z_t), \zhatr(Z_t)]$. Similar, equations hold for $V$ with the no-trade region given by $[\zetal(Z_t), \zetar(Z_t)]$.

{\em Sketch of the proof:} We want to show that $ \abs{\V(t_0, x, y,z)}\le C\abs{V(t_0, x,y,z)}$, for some fixed $(t_0, x,y,z)\in[0,T]\bar\Sv\times\R.$ If $t_0=T$, this follows from the terminal condition, and in case $(x,y)\in\partial \Sv$ it can be shown by adapting the proof in  \cite{ShreveSoner} that the optimal strategy is to liquidate the stock position, resulting in zero total wealth, in which case, $\V=\U$, and $\abs{V}\le C\abs{\U}$. Hence, we proceed with the assumption that $t_0\in[0,T),$ and $(x, y)\in \Sv$ and consider the strategy $\hat X_t, \hat Y_t, \hat Z_t$ starting from $(x, y,z)$ at time $t-$ that keeps, $\frac{\Y_t}{\X_t}$ inside $\widehat{\mbox{NT}}$. For which we have that $\hatL_t = \int_0^t\ind_{\left\{\frac{\Y_s}{\X_s}=\zhatr\right\}} d\hatL_s, \hatM_t = \int_0^t\ind_{\left\{\frac{\Y_s}{\X_s}=\zhatl\right\}} d\hatM_s.$ 
It follows that
\begin{align}
\V(t_0, x, y, z)&  = \V(T, \X_T, \Y_T, \Z_T) -  \int_{t_0}^T  (\d_t+\D^\eps)\V(t,\X_t,\Y_t,\Z_t)dt  \label{eq:Ito-hatV1}\\
&- \int_{t_0}^T f(\Z_t) \d_y  \V(t,\X_t,\Y_t,\Z_t) dB^1_t -\int_{t_0}^T \frac1{\sqrt\eps}\beta\(\Z_t\)  \d_z \V(t,\X_t,\Y_t,\Z_t) dB^2_t  \\
&-\int_{t_0}^T (\d_y - \d_x)\V(t,\X_t,\Y_t,\Z_t)d\hatL_t -\int_{t_0}^T (\d_y - \d_x)\V(t,\X_t,\Y_t,\Z_t)d\hatM_t.
\end{align}
Using the fact that the $dt, d\hatL_t, d\hatM_t$ terms in \eqref{eq:Ito-hatV1} are zero by the optimality of the strategy we conclude that
\begin{align}
\V(t_0, x, y, z)&= \V(T, \X_T, \Y_T, \Z_T) - \int_0^T f(\Z_t) \d_y  \V(t,\X_t,\Y_t,\Z_t) dB^1_t \\
&\quad-\int_0^T \frac1{\sqrt\eps}\beta\(\Z_t\)  \d_z \V(t,\X_t,\Y_t,\Z_t) dB^2_t,
\end{align}
Taking the expectation, we conclude that
$$\V(t_0, x, y, z)  = \E_{t_0}^{x,y,z}\[\V(T, \X_T, \Y_T, \Z_T) \].$$
Writing an equation for $V$ similar to \eqref{eq:Ito-hatV1}, and using the same strategy $\hat X_t, \hat Y_t, \hat Z_t$, we conclude that
\begin{align}
V(t_0, x, y, z)  
&\ge V(T, \X_T, \Y_T, \Z_T) - \int_0^T f(\Z_t) \d_y  V(t,\X_t,\Y_t,\Z_t) dB^1_t \\
&-\int_0^T \frac1{\sqrt\eps}\beta\(\Z_t\)  \d_z V(t,\X_t,\Y_t,\Z_t) dB^2_t,
\end{align}
where we have used the fact that $V$ also solves the HJB equation \eqref{eq:HJB-init}. Again, taking the expectation, and recalling the final time condition  \eqref{eq:final-cond}, we conclude that
$$V(t_0, x, y, z) \ge V(T, \X_T, \Y_T, \Z_T) \ge \frac1C\V(T, \X_T, \Y_T, \Z_T) =\frac1C\V(t_0, x, y, z).$$
The other inequality can be proved similarly, by reversing the roles of $V$ and $\V$. 

\section{Proof of Proposition \ref{prop:delta1v1}\label{propproof}}
To find $v^{\lam,1}$, we use the method of variation of parameters to solve the inhomogeneous equation \eqref{eq:HJB1}, whose source (right-hand side) term  after dividing by the coefficient of the 2nd derivative is
\begin{equation}
 F(\zeta):= \frac{-V_3D_1D_2v^{\lam,0} + (1-\gamma)\delta_1v^{\lam,0}}{\frac12\sigb^2\zeta^2}. \label{Fdef}
 \end{equation}
Specifically, we write 
\begin{align}
v^{\lam,1} (\zeta)= A_{+}(\zeta)v_{+} (\zeta)+ A_{-}(\zeta)v_{-}(\zeta).
\label{eq:v1-form}
\end{align} 
Then we need that $A_{\pm}$ solve the system of equations
\begin{align}
&A_{+}' v_{+} + A_{-}' v_{-}=0,\label{eq:c-cond1}\\
&A_{+}' v_{+}' + A_{-}' v_{-}'= F(\zeta). %
\label{eq:c-cond2}
\end{align}
Indeed, using \eqref{eq:c-cond1}, \eqref{eq:c-cond2}, and the fact that $\lnt v_{\pm} (\zeta) =0$, we see that
\begin{align}
\lnt \left( A_{+}v_{+} + A_{-}v_{-} \right) = \frac12\sigb^2\zeta^2\left(A_{+}' v_{+}' + A_{-}' v_{-}'\right)=-V_3D_1D_2v^{\lam,0} + (1-\gamma)\delta_1v^{\lam,0}.
\end{align}
The solution of the system \eqref{eq:c-cond1}-\eqref{eq:c-cond2} is
\begin{equation}
A_{\pm}(\zeta) =  \mp\int \frac{ v_{\mp} }{ v_{-}'v_{+} - v_{+}'v_{-} }  F(\zeta) d\zeta + C_{\pm}, \label{Apmsol}
\end{equation}
where the constants $C_\pm$ will be determined by the boundary conditions \eqref{eq:boundary3} and \eqref{eq:boundary5}.  
We divide the proof into two cases: the case when $\theta_{\pm}$ the roots of equation \eqref{eq:theta} are real, and the case when they are complex. These are presented in Sections \ref{realcomps} and \ref{sec:fast-complex} respectively.

\subsection{Real $\theta_{\pm}$\label{realcomps}}
When the roots $\theta_\pm$ of the quadratic in \eqref{eq:theta} with volatility $\sigma=\sigb$ and at the eigenvalue $\delta_0$ are real, we have that
$$ w(\zeta) = c_+\zeta^{\theta_++k-2} + c_-\zeta^{\theta_-+k-2}, $$
where $c_{\pm}$ were given in \eqref{cpm}.
We compute $D_1D_2v^{\lam,0}= L_+ c_{+}\zeta^{\theta_{+}} + L_-c_{-}\zeta^{\theta_{-}}$, where $ L_{\pm}$ were defined in Proposition \ref{prop:delta1v1}, and so the calculations for $\delta_1$ from the formula \eqref{eq:delta1} 
leading to \eqref{eq:delta1-real} are straightforward.

Next, we compute that
$$ \frac{v_{\mp}}{ v_{-}'v_{+} - v_{+}'v_{-}}= -\frac{\zeta^{1-\theta_{\pm}}}{\Delta\theta},$$
and so
$ \frac{v_{\mp}}{ v_{-}'v_{+} - v_{+}'v_{-}}F = \tilde c_{\pm}\zeta^{-1} + \tilde c_{\mp}\zeta^{\mp\Delta\theta-1},$ %
where $\tilde c_{\pm}$ are given in \eqref{eq:tilde-cpm}.
Then, from \eqref{Apmsol}, we have
$$ A_{\pm} = \mp\tilde c_{\pm}\log\zeta +  \frac{\tilde c_{\mp}}{\Delta\theta}\zeta^{\mp\Delta\theta} + C_\pm, $$
and, from \eqref{eq:v1-form}, we have
\begin{equation}
v^{\lam,1} = C_+\zeta^{\theta_{+}} + C_-\zeta^{\theta_{-}} - \tilde c_{+}\zeta^{\theta_{+}}\log\zeta + \tilde c_{-}\zeta^{\theta_{-}}\log\zeta , \label{v1formula}
\end{equation}
where we have absorbed some constants into $C_{\pm}$ and retained the same label as they have yet to be determined. 

Inserting \eqref{v1formula} into the boundary conditions  \eqref{eq:boundary3} and \eqref{eq:boundary5} and dividing by $1-\gamma$, we obtain:
\begin{align}
M\left(\begin{array}{c}C_{+}\\ C_{-}\end{array}\right) = b,
\label{eq:system}
\end{align}
where the matrix $M$ from \eqref{mtx1} evaluates in this case to
\begin{align}
M=\left(\begin{array}{cc} k_\l\theta_{+}{\l_0}^{\theta_{+}-1} - {\l_0}^{\theta_{+}}   & k_\l\theta_{-}{\l_0}^{\theta_{-}-1} -{\l_0}^{\theta_{-}} \\
k_u\theta_{+}{\u_0}^{\theta_{+}-1} -{\u_0}^{\theta_{+}}  & k_u\theta_{-}{\u_0}^{\theta_{-}-1} -{\u_0}^{\theta_{-}} 
\end{array}\right),\label{eq:M}
\end{align}
and the vector $b$ is
\begin{align}
b= &
 \left(\begin{array}{c}
 \tilde c_{-}{\l_0}^{\theta_{-}-1}\left(\l_0\log \l_0 - k_\l(1+\theta_-\log \l_0) \right)\\
 \tilde c_{-}{u_0}^{\theta_{-}-1}\left(\u_0\log u_0 - k_u(1+\theta_-\log u_0)\right)
\end{array}\right)\\
&-
 \left(\begin{array}{c}
\tilde c_{+}{\l_0}^{\theta_{+}-1}\left(\l_0\log \l_0 - k_\l(1+\theta_+\log \l_0)\right)\\
  \tilde c_{+}{u_0}^{\theta_{+}-1}\left(\u_0\log u_0 - k_u(1+\theta_+\log u_0)\right)
 \end{array}\right)
\end{align}
and $(k_\l,k_u)$ were defined in \eqref{klku} and we insert the replacements $(L_0,U_0)=(\l_0,u_0)$.

We recall that $M$ is a singular matrix, as we have required that its determinant is zero by choice of $\delta_0$ in \eqref{determinanteqn}. The Fredholm alternative solvability condition for $b$ is satisfied by choice of $\delta_1$ in  \eqref{eq:delta1}. Thus, we get a particular solution by taking $C_{-}=0$ and $C_{+}$ as given by \eqref{eq:C+-real}.
This determines $v^{\lam,1}$ as given in \eqref{eq:v1} up to the addition of a multiple of $v^{\lam,0}$.

\subsection{Complex $\theta_{\pm}$}\label{sec:fast-complex}
When the roots $\theta_\pm$ of the quadratic in \eqref{eq:theta} at the eigenvalue $\delta_0$ are complex, we have
\begin{equation*}
v^{\lam,0}=
c_+v_+(\zeta) + c_-v_-(\zeta) = 
\zeta^{\theta_r}(c_+\cos(\theta_i\log\zeta)+c_-\sin(\theta_i\log\zeta)),
\end{equation*}
where $\theta_r=-\half(k-1)$ using the notation for $k=\frac{\mu}{\frac12\sigb^2}$ defined in \eqref{kdefs}, and c$_{\pm}$ were chosen in \eqref{cpm}. 

We first compute $\delta_1$. From \eqref{eq:delta1}, we have $\delta_1 = \frac{V_3}{1-\gamma}(I_1/I_2)$, where $I_{1,2}$ are the integrals in the numerator and denominator respectively to be computed. Using the change of variable $\eta=\log\zeta$, we have
$$ I_1 = \int_{\l_0}^{\u_0}\w D_1D_2v^{\lam,0} d\zeta= \int_{\log\l_0}^{\log u_0} 
e^{((k-2)+\theta_r+1)\eta} \vec{c}^T\vec{T}(\eta)\left(\frac{\pa^3}{\pa\eta^3} - \frac{\pa^2}{\pa\eta^2}\right)e^{\theta_r\eta}\vec{c}^T\vec{T}(\eta)\,d\eta,
$$
where we define $\vec{T}(\eta) \define \left(\begin{matrix} \cos(\theta_i\eta) \\ \sin(\theta_i\eta) \end{matrix}\right), $ and $\vec{c}=\left(\begin{matrix} c_+ \\ c_- \end{matrix}\right)$ as was defined in \eqref{eq:Theta-c}.
Differentiating the formula in \eqref{vpmcomplex} amounts to multiplying the coefficients of the $\cos$ and $\sin$ terms by the matrix $\Theta$. 
Then $I_1$ reduces to
\begin{eqnarray*} 
I_1 &=& \int_{\log\l_0}^{\log\u_0}\vec{c}^T\vec{T}(\eta)\,\vec{q}^T\vec{T}(\eta)\,d\eta\\
&=& \frac{1}{2\theta_i}\left[\half(\vec{\hat{c}}^T\vec{q})\sin(2\theta_i\eta) + (\vec{c}^T\vec{q})\theta_i\eta - 
\half(\vec{c}^T \vec{\check{q}} )\cos(2\theta_i\eta)\right]_{\log\l_0}^{\log u_0},
\end{eqnarray*}
where $\vec{\hat{c}}, \vec{\check{q}}$ were given in \eqref{eq:vec-q}.
Similarly, we have 
$$ I_2 = {\int_{\l_0}^{\u_0}\w v^{\lam,0}d\zeta}%
= \frac{1}{2\theta_i}\left[\half(\vec{\hat{c}}^T\vec{c})\sin(2\theta_i\eta) + (\vec{c}^T\vec{c})\theta_i\eta - 
\half(\vec{c}^T\vec{\check{c}})\cos(2\theta_i\eta)\right]_{\log\l_0}^{\log u_0}, $$
where $\vec{\check{c}}$ was also given in \eqref{eq:vec-q}. These lead to the expression \eqref{delta1complex} for $\delta_1$.

Next, we compute $A_{\pm}$ in \eqref{Apmsol} where $F$ was defined in \eqref{Fdef}. Again in the variable $\eta=\log\zeta$, we have
$$\tilde A_{\pm}(\eta) = \mp\int\frac{\tilde{v}_\mp(\eta)}{e^{-\eta}\tilde W(\eta)}\tilde{F}(\eta)e^\eta\,d\eta + C_\pm, $$
where $\tilde v_\pm$ denotes $v_\pm$ in $\eta$ co-ordinates, and the Wronskian simplifies to $\tilde W(\eta)=\theta_i e^{2\theta_r\eta}$.
We find that $\tilde{F}(\eta) = e^{-2\eta+\theta_r\eta}\vec{\tilde q}^T\vec{T}(\eta), $ with $\vec{\tilde q} $ was defined in \eqref{eq:vec-q}.
Then %
we obtain
\begin{eqnarray*}
\tilde A_{+} (\eta)&=& -\frac{\tilde q_{-}}{2\theta_{i}}\eta +\frac{\tilde q_{-}} {4\theta_{i}^2}\sin(2\theta_{i}\eta)+ \frac{\tilde q_{+}} {4\theta_{i}^2}\cos(2\theta_{i}\eta)+ C_+\\
\tilde A_{-} (\eta)&=&+\frac{\tilde q_{+}}{2\theta_{i}}\eta +\frac{\tilde q_{+}} {4\theta_{i}^2}\sin(2\theta_{i}\eta)- \frac{\tilde q_{-}} {4\theta_{i}^2}\cos(2\theta_{i}\eta) + C_-.
\end{eqnarray*}
The constants $C_\pm$ are determined by the boundary conditions  \eqref{eq:boundary3} and \eqref{eq:boundary5}. As before, we can take $C_-=0$. Therefore, from \eqref{eq:v1-form}, we have $v^{\lam,1} $ is given by \eqref {eq:v1-complex} using definitions of $C_{+}$ and $A_{\pm}$ in \eqref{eq:C-plus-complex}--\eqref{eq:A-plus-complex}.

\section{Explicit Calculation of the Vega in the Real Case\label{sec:explicit}}%
We will use \eqref{cpm} to calculate $\dot c_{\pm}$ from equation \eqref{eq:v0-sigma}. First, using \eqref{eq:pim} and \eqref{eq:pip}, we find that
\begin{align*}
\dsig\pi_{\pm} = \frac{\dpr + \gamma\sigma\pi_{\pm}^2}{\mu-\gamma\sigma^2\pi_{\pm}},
\end{align*}
where $\dpr:=\dsig\Delta_0$ is computed in Lemma \ref{dprlemma}.
Furthermore, from the same equations we also have
\begin{align*}
\dsig{L_0} &= (1+L_0)^2 \dsig\pi_{-}  = (1+L_0)^2 \frac{\dpr + \gamma\sigma\pi_{-}^2}{\mu-\gamma\sigma^2\pi_{-} },\\
\dsig{U_0} &= \(\frac1{1-\lam}+{U_0}\)^2 \dsig\pi_{+}  =  \(\frac{1}{1-\lam}+{U_0}\)^2 \frac{\dpr + \gamma\sigma\pi_{+}^2}{\mu-\gamma\sig^2\pi_{+}}.
\end{align*}
From \eqref{klku} we obtain that
\begin{align*}
\dsig k_\u = \frac{\dsig{U_0}}{1-\gamma},\qquad \dsig k_\l =\frac{\dsig{L_0}}{1-\gamma},
\end{align*}
and from \eqref{eq:theta} we have
\begin{align*}
\dot\theta_{\pm} = \frac{\sigma\theta_{\pm}(1-\theta_{\pm}) + (1-\gamma)\dpr}{\sigma^2\theta_{\pm}+ \(\mu-\frac{\sigma^2}{2}\)},
\end{align*}
It follows from \eqref{cpm} that
\begin{eqnarray*}
\dot {c}_{\pm} &=& \pm v_{\mp}'(L_0)\dsig{L_0}  \pm  v_{\mp}(L_0)\dot\theta_{\mp}\log{L_0} \mp \frac{\dsig{L_0}}{1-\gamma} v'_{\mp}(L_0) \\
&&\mp k_{\l}\(  \dot\theta_{\mp}\frac{v_{\mp}(L_0)}{L_0} +v_{\mp}''(L_0)\dsig{L_0} +  v_{\mp}'(L_0)\dot\theta_{\mp}\log{L_0}\).
\end{eqnarray*}
These are used to compute 
\begin{align}
\d_\sig V_0 = \dot c_{+}v_{+}(\zeta) + \dot\theta_{+} c_{+} v_{+}(\zeta)\log\zeta + \dot c_{-}v_{-}(\zeta) + \dot\theta_{-} c_{-} v_{-}(\zeta)\log\zeta. \label{vg}
\end{align}

\section{Proof of Proposition \ref{prop:slow}\label{propproof2}}
In the case when the roots $\theta_{\pm}$ are real, we have 
\begin{align}
\d_\sig V_0 = \dot c_{+} \zeta^{\theta_{+}} + c_{+}\dot\theta_{+}\zeta^{\theta_{+}} \log\zeta+
\dot c_{-} \zeta^{\theta_{-}} + c_{-}\dot\theta_{-}\zeta^{\theta_{-}} \log\zeta.\label{eq:v0-sigma}
\end{align}
It follows that
\begin{align*}
&D_1 \d_\sig V_0 = Q_+\zeta^{\theta_{+}} 
+ c_{+}\theta_{+} \dot\theta_{+}\zeta^{\theta_{+}} \log\zeta
+ Q_-\zeta^{\theta_{-}} 
+ c_{-}\theta_{-} \dot\theta_{-}\zeta^{\theta_{-}} \log\zeta,\\
&F(z,\zeta) =\Delta\theta\(\tilde c_{+}(z)\zeta^{\theta_{+}-2 } + \tilde d_{+}(z) \zeta^{\theta_{+}-2}\log\zeta
 +\tilde c_{-}(z)\zeta^{\theta_{-}-2 } + \tilde d_{-}(z) \zeta^{\theta_{-}-2}\log\zeta   \),\\
&\frac{v_{\mp}F}{ v_{-}'v_{+} - v_{+}'v_{-}}
= \frac{\tilde c_{\pm}(z) + \tilde d_{\pm}(z) \log\zeta
 +\tilde c_{\mp}(z)\zeta^{\theta_{\mp}-\theta_{\pm} } + \tilde d_{\mp}(z) \zeta^{\theta_{\mp}-\theta_{\pm}}\log\zeta}{\zeta}   ,
\end{align*}
where we have used  that
$ \frac{v_{\mp}}{ v_{-}'v_{+} - v_{+}'v_{-}}= \frac{\zeta^{1-\theta_{\pm}}}{\Delta\theta},$ together with the definitions of $\tilde c_{\pm}$ and $\tilde d_{\pm}$ in \eqref{eq:tilde-cd-slow}.
A calculation shows that 
\begin{align*}
A_{\pm} = \mp\(\tilde c_{\pm}\log\zeta + \frac{\tilde d_{\pm}}{2}\log^2\zeta \mp \frac{\tilde c_{\mp}}{\Delta\theta}\zeta^{\mp\Delta\theta} \mp \frac{\tilde d_{\mp}}{\Delta\theta} \(\log\zeta \pm \frac1{\Delta\theta}\)\zeta^{\mp\Delta\theta} \) + C_{\pm}.
\end{align*}
Therefore, from \eqref{eq:v1-form}, we have
\begin{align}
v^{\lam,1} &= - \(\tilde c_{+} - \frac{\tilde d_{+}}{\Delta\theta}\)\zeta^{\theta_{+}}\log\zeta +\(\tilde c_{-} + \frac{\tilde d_{-}}{\Delta\theta}\)\zeta^{\theta_{-}}\log\zeta - \frac{\tilde d_{+}}{2}\zeta^{\theta_{+}}\log^2\zeta \label{v1formula-slow}\\
&+\frac{\tilde d_{-}}{2}\zeta^{\theta_{-}}\log^2\zeta+C_+\zeta^{\theta_{+}} + C_-\zeta^{\theta_{-}},\nonumber
\end{align}
where we have absorbed some constants into $C_{\pm}$ and retained the same label as they have yet to be determined. 

As before, we obtain a system of equation similar to \eqref{eq:system} for $C_{\pm}$, with the same matrix $M$ defined as before in \eqref{eq:M}, but with different right hand side vector $b
= \left(\begin{array}{c} b_1\\b_2\end{array}\right)$, given by
\begin{align}
b_1=& \left(\log \l_0 - \frac{k_\l}{\l_0}(1+\theta_-\log \l_0)\right)\left[\(\tilde c_{-} + \frac{\tilde d_{-}}{\Delta\theta}\){\l_0}^{\theta_{-}} - \(\tilde c_{+} - \frac{\tilde d_{+}}{\Delta\theta}\){\l_0}^{\theta_{+}}\right]\\
&-(h_+(\l_0)-h_-(\l_0))\log\l_0,
\label{eq:b1-slow}%
\end{align}
where $h_\pm(\l_0)=\frac{\tilde d_{\pm}}{2}{\l_0}^{\theta_{\pm}}\left(\log \l_0 - \frac{k_\l}{\l_0}(2+\theta_\pm\log \l_0)\right)$.
The second component $b_2$ is given by the same formula with $\l_0$ replaced by $\u_0$.
We recall that $M$ is a singular matrix, as we have required that its determinant is zero by choice of $\delta_0$ in \eqref{determinanteqn}. The Fredholm alternative solvability condition for $b$ is satisfied by choice of $\delta_1$ in  \eqref{eq:delta1}. Thus, we get that a particular solution by taking $C_{-}=0$ and $C_{+}$ as defined in \eqref{eq:C-plus-slow}
This determines $v^{\lam,1}$ up to an addition of a multiple of $v^{\lam,0}$ as in \eqref{eq:v1-slow} 
for any $\xi\in\R$.

\bibliographystyle{plainnat}
\small{\bibliography{references}}

\end{document}